
\documentclass[12pt, onecolumn]{IEEEtran} 


\usepackage{url}
\usepackage{makecell}
\usepackage{amsfonts}
\usepackage{amssymb}
\usepackage{amsmath}
\usepackage{graphicx, colordvi, psfrag}
\usepackage{calc,pstricks, pgf, xcolor}
\usepackage{epsfig, cite}
\usepackage{bm} 
\usepackage{bbm} 
\usepackage{enumerate}














\newcommand{\beq}[1]{\begin{equation}\label{#1}}
\newcommand{\eeq}{\end{equation}}

\newcommand{\beqn}[1]{\begin{eqnarray}\label{#1}}
\newcommand{\eeqn}{\end{eqnarray}}






\newtheorem{thmbody}{Theorem}
\newenvironment{thm}{
\begin{thmbody}
	}{
	\end{thmbody} 
	}
\newtheorem{dfnbody}{Definition}

\newtheorem{corbody}{Corollary}

\newtheorem{lemmabody}{Lemma}
\newenvironment{lemma}{
\begin{lemmabody}
	}{
	\end{lemmabody} 
	}
\newtheorem{propbody}{Proposition}

\newenvironment{proof}{
	{\it Proof:}
	}{
 $\Box$
	}






%
%
%
%
%
%

%
%
%
%
%
%
%
%
%
%
%
%
\usepackage{dsfont}
     \textheight     9.5in



\hyphenation{ge-nerated}
\hyphenation{neces-sarily}
\hyphenation{me-moryless}
\hyphenation{pro-bability}

\begin{document}
\title{A Generalization of the DMC}

\author{
\IEEEauthorblockN{Sergey Tridenski and Anelia Somekh-Baruch}\\
\IEEEauthorblockA{Faculty of Engineering\\Bar-Ilan University\\ Ramat-Gan, Israel\\
Email: tridens@biu.ac.il, somekha@biu.ac.il}
\thanks{
The material in this paper was partially presented in ITW2020 \cite{TridenskiSomekhBaruch20}.}
\thanks{
This work was supported by the Israel Science Foundation (ISF) under grant \#631/17.}
}



\maketitle
\begin{abstract}
We consider a generalization of the discrete memoryless channel, in which the channel probability distribution is
replaced by a uniform distribution over clouds of channel output sequences.
For a random ensemble of such channels, we derive an achievable error exponent, as well as its converse together with the optimal correct-decoding exponent,
all as functions of information rate.
As a corollary of these results, we obtain the channel ensemble capacity.
\end{abstract}



%
%
%
%


\bigskip

\section{Introduction}\label{Int}

\bigskip

We consider the basic information-theoretic scenario of point-to-point communication.
The standard go-to model for such a scenario is the discrete memoryless channel (DMC).
With this model, the communication performance is characterized by the
channel capacity, 
surrounded by 
the error and the correct-decoding exponents --- as functions of the 
information rate.
In order to be evaluated
by these 
quantities,
the communication is usually done with a codebook of blocks of $n$ channel input symbols,
conveying $2^{nR}$ equiprobable
messages, where $R$ is the rate in bits. 

In this paper we
slightly deviate
from the standard DMC model.
In 
our set-up,
the DMC itself reappears as a limiting case.
Consider first the following communication scheme.
Let $K$ be some positive real parameter 
in addition to the rate $R$,
and suppose that there has been an exponentially large number $2^{n(R\,+\,K)}$ of block transmissions through a DMC.
Each transmitted block is a codeword of length $n$,
chosen each time with uniform probability from the same codebook of size $2^{nR}$.
This corresponds to a 
significant
amount of transmitted data of $nR \cdot 2^{n(R\,+\,K)}$ bits.
By the end of these transmissions, each of the $2^{nR}$ codewords has been used approximately $2^{nK}$ times, resulting in $2^{nK}$ 
not necessarily distinct channel outcomes, forming an unordered ``cloud''. The parameter $K$ therefore represents an exponential size of the cloud of channel output vectors generated by a single codeword.
Suppose that in the end of the $2^{n(R\,+\,K)}$ transmissions
the outcome clouds of all the codewords are revealed noiselessly to the decoder.
For small $K$, when most of the output vectors in the clouds are distinct,
this ``revelation'' would be approximately equivalent to a noiseless transmission of the same $nR \cdot 2^{n(R\,+\,K)}$ bits of data.
For higher $K$, however, the description of the clouds will require an exponentially smaller number of noiseless bits
compared to $nR \cdot 2^{n(R\,+\,K)}$.

Note, that given the $2^{n(R\,+\,K)}$ received channel-output blocks, 
with time indices $j = 1, ..., 2^{n(R\,+\,K)}$, and the knowledge of the clouds, the optimal decoder for any given output block $j$
(in the sense that it minimizes the probability of error for the block $j$) 
chooses the codeword with the maximal number of replicas of this block
in its cloud.
This decoder is optimal regardless
of the message probabilities or the transition probabilities of the DMC, 
that created the clouds.
Moreover,
the same decoder, which relies on clouds and is oblivious of the channel transition probabilities,
remains optimal whether or not the channel is memoryless
within each
block. 

Given the clouds,
the receiver sees effectively 
a different channel ---
one which chooses its output vector with uniform probability from the cloud of the sent codeword.
This channel can be described by a model, different from DMC.
In this model, we assume that the messages are equiprobable and each cloud contains exactly $2^{nK}$ vectors.
The clouds are generated randomly i.i.d. with a channel-generating distribution,
independently for each codeword in a codebook.
This is similar to constant composition clouds used for superposition coding \cite{SomekhBaruch15} through a noiseless channel.
The capacity and the relevant probability exponents of this scheme can be given in the average sense, for the ensemble of random channels.
As the exponential size of the clouds $K$ tends to infinity,
the random channel ensemble converges to a single channel with the transition probabilities of the channel generating distribution,
which is a DMC in our case \cite{GallagerBook}, \cite{Gallager73}, \cite{Arimoto73}.

This paper is organized as follows.
In Section~\ref{Mod} we start introducing our notation and define the channel model. In Section~\ref{Ach} we derive an achievable error exponent for the random channel ensemble.
In Sections~\ref{Con},~\ref{AltRep} we provide 
converse results. We derive
an upper bound on the optimal error exponent (in Section~\ref{Con})
and
the optimal correct-decoding exponent (in Sections~\ref{Con},~\ref{CorDec}) of the random channel ensemble.
In Section~\ref{Cap} we obtain the channel ensemble capacity, as a corollary of the previous sections.


\bigskip

\section{Channel model}\label{Mod}

\bigskip

Let $x \in {\cal X}$ and $y \in {\cal Y}$ be letters from finite channel-input and -output alphabets, respectively.
Let $W(y \, | \, x)$ denote transition probabilities of a channel-{\em generating} distribution.
The channel is generated for a given 
codebook
of blocks of a length $n$ of letters from ${\cal X}$.
Let ${\cal C\mathstrut}_{\!n}$ be such a codebook,
consisting of $\lfloor e^{nR}\rfloor$ codewords
${\bf x\mathstrut}_{m} \in {\cal X\mathstrut}^{n}$, $m = 1, 2, ..., \lfloor e^{nR}\rfloor$,
where $R$ is a positive real number representing a communication rate.

Given this codebook and another positive real number $K$, a channel {\em instance} is generated with the distribution $W$, as follows.
For each one of the 
$\lfloor e^{nR}\rfloor$
messages $m$
an exponentially large number $\lfloor e^{nK}\rfloor$ of sequences ${\bf y} \in {\cal Y\mathstrut}^{n}$
is generated randomly given the corresponding codeword ${\bf x\mathstrut}_{m}$, where each sequence ${\bf y}$ is generated independently of others. Each letter ${y\mathstrut}_{i}$, $i = 1, 2, ..., n$, of each such sequence ${\bf y}$ is generated i.i.d. according to $W$ given the corresponding letter ${x\mathstrut}_{m i}$ of ${\bf x\mathstrut}_{m}$. In this way, the set of clouds of ${\bf y}$'s of size $\lfloor e^{nK}\rfloor$ for each $m$
forms one channel instance.

We assume that the messages $m = 1, 2, ..., \lfloor e^{nR}\rfloor$,
represented by the codewords ${\bf x\mathstrut}_{m}$, are equiprobable.
Given that a particular 
message
is sent through the channel,
the stochastic channel {\em action} now amounts to choosing exactly one of all the not necessarily unique $\lfloor e^{nK}\rfloor$ vectors ${\bf y}$,
corresponding to the sent 
message,
with the uniform probability $1/\lfloor e^{nK}\rfloor$.
We assume that the decoder, receiving the channel output vector $\widehat{\bf y}$, knows not only the codebook, but also the channel instance, i.e., all the $\lfloor e^{nR}\rfloor$ clouds comprising the corresponding ${\bf y}$'s.

A cloud can have more than one replica of the received vector $\widehat{\bf y}$.
The maximum-likelihood (ML) decoder makes an error with non-vanishing probability $\geq \frac{1}{2}$,
if there exists an incorrect message 
with the same or a higher number of vectors $\widehat{\bf y}$ in its cloud,
comparing to
the sent message 
itself. Otherwise there is no error.


\bigskip

\section{Achievable error exponent}\label{Ach}

\bigskip

Suppose the codebook is generated i.i.d. according to a distribution over ${\cal X}$ with probabilities $P(x)$.
Let $\,\overline{\!{P\mathstrut}}{\mathstrut}_{\!e}^{\,(n)}$ denote the average error probability 
of the maximum-likelihood decoder, averaged over all possible messages, codebooks\footnote{In the next section we prove our converse result, which does not assume random codes, but is valid for the best sequence of block codes. Together with the random-coding result of Section~\ref{Ach} this will give the channel ensemble capacity.}, and channel instances.
Let $T(y)V(x \, | \, y)$ denote probabilities in an auxiliary distribution over ${\cal X}\times {\cal Y}$ and let us define:
\begin{align}
{A\mathstrut}_{TV}^{P}
\;\; &
\triangleq \;\; D(V \, \| \, P \, | \, T), \;\;\;\;\;\;
\;\;\;\;\;\;\;\;\;\;\;
{H\mathstrut}_{\!T}
\;\; 
\triangleq \;\; {\mathbb{E}\mathstrut}_{T}\big[\!-\!\log T(Y)\big], 
\nonumber \\
{B\mathstrut}_{TV}^{W}
\;\; &
\triangleq \;\; {\mathbb{E}\mathstrut}_{TV}\big[\!-\!\log W(Y\,|\,X)\big],
\label{eqDefs} \\
{E\mathstrut}_{\!e}(P, R, K) \;\; & \triangleq \;\;
\min_{\substack{\\ TV}}\;
\Big\{
D(TV \, \| \, PW) + \big|{A\mathstrut}_{TV}^{P}-R + \big|{B\mathstrut}_{TV}^{W} - K \big|^{+}\big|^{+}\Big\},
\label{eqOneDef}
\end{align}
where $D(V\,\|\,P\,|\,T) = \sum_{x, \,y}T(y)V(x\,|\,y)\log\frac{V(x\,|\,y)}{P(x)}$ is the Kullback-Leibler divergence averaged over $T$, the expectation ${\mathbb{E}\mathstrut}_{TV}$ 
is with respect to the joint distribution $TV$, and $\,|\,t\,|^{+}\triangleq \max\,\{0, t\}$. All the logarithms 
here and 
below
are to base $e$. In what follows, we usually suppress the superscripts in ${A\mathstrut}_{TV}$ and ${B\mathstrut}_{TV}$.
Then we can show the following

\bigskip

\begin{thm}[Random-coding error exponent]\label{thmAch}
\begin{equation} \label{eqPerrorBoth}
\lim_{n\,\rightarrow\,\infty} \,-\tfrac{1}{n}\log 
\,\overline{\!{P\mathstrut}}{\mathstrut}_{\!e}^{\,(n)}
\;\; = \;\;
{E\mathstrut}_{\!e}(P, R, K),
\end{equation}
{\em where ${E\mathstrut}_{\!e}(P, R, K)$ is defined in (\ref{eqOneDef}).}
\end{thm}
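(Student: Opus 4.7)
The plan is to use the method of types, enumerating joint types of $(\bx_m, \hat{\by})$ (sent codeword and received vector) and of the pairs $(\bx_{m'}, \hat{\by})$ for the competitors $m' \neq m$, and analyzing the Binomial distribution of the number of replicas of $\hat{\by}$ in each cloud. Marginally $(\bx_m, \hat{\by}) \sim P^n(\bx)\,W^n(\by \mid \bx)$, since $\hat{\by}$ is chosen uniformly from an i.i.d.\ $W$-cloud generated from $\bx_m$; so the joint type being $TV$ has probability $\doteq e^{-nD(TV\|PW)}$ by standard type counting. Conditional on this type, $N_m = 1 + \mathrm{Bin}(\lfloor e^{nK}\rfloor - 1,\, W^n(\hat{\by}\mid\bx_m))$ concentrates around $\max\{1, e^{n(K-B)}\}$ with $B = {B\mathstrut}_{TV}^{W}$. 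Each competitor's pair $(\bx_{m'}, \hat{\by})$ has joint type $TV'$ (sharing the $y$-marginal $T$) with conditional probability $\doteq e^{-nA'}$, and given this, $N_{m'}\sim\mathrm{Bin}(\lfloor e^{nK}\rfloor,\, e^{-nB'})$ independently of $N_m$.

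For the converse (a lower bound on $\bar{P}_e^{(n)}$), I fix a target type $TV$ and consider the event that the sent pair has joint type $TV$ and at least one competitor also has joint type $TV$ with $N_{m'} \geq N_m$. The sent-type event has probability $\doteq e^{-nD(TV\|PW)}$; conditional on it, a single competitor matching type $TV$ has $N_{m'}\geq N_m$ with probability $\doteq e^{-nA}\cdot e^{-n|B-K|^+}$. The factor $e^{-n|B-K|^+}$ comes from the Binomial tail: when $B\leq K$ both $N_m$ and $N_{m'}$ concentrate exponentially at the same rate and $N_{m'}\geq N_m$ with constant probability; when $B > K$ the mean is exponentially small and $P(N_{m'}\geq 1) \doteq e^{-n(B-K)}$. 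Independence across the $\doteq e^{nR}$ competitors gives $P(\exists m') \doteq \min\{1, e^{n(R-A-|B-K|^+)}\} = e^{-n|A-R+|B-K|^+|^+}$, and multiplying then maximizing over $TV$ yields $\bar{P}_e^{(n)} \geq \exp(-nE_e(P,R,K))$ in exponent.

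For achievability, I again partition by the sent joint type $T^*V^*$ and apply the union bound over competitors. A single competitor of type $TV'$ satisfies $N_{m'} \geq N_m$ with probability $\doteq e^{-n f(B',B^*,K)}$, where $f(B',B^*,K) = |B'-K|^+$ when $B^*\geq K$ (so $N_m\doteq 1$), and when $B^* < K$ one has $f = 0$ if $B' \leq B^*$ but effectively $+\infty$ otherwise (matching an exponentially large $N_m$ with a Binomial of sub-unit mean is super-exponentially unlikely). The union bound then yields $\bar{P}_e^{(n)} \doteq \max_{T^*V^*} e^{-nD(T^*V^*\|PW)} \cdot \min\{1,\, e^{nR}\max_{TV'} e^{-n(A'+f)}\}$, and a case-by-case check over the two regimes $B^* \geq K$ and $B^* < K$ shows that the minimizing $T^*V^*$ may be taken equal to the minimizing $TV'$, so the double minimization collapses to $\min_{TV}\{D(TV\|PW) + |A-R+|B-K|^+|^+\} = E_e(P,R,K)$.

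The main obstacle is precisely this collapse of the double minimization, which must be verified in both regimes ($B^* \geq K$ and $B^* < K$) to show that the upper and lower bounds match in the exponent. Behind it lies the Binomial-tail analysis that produces the clean $|B-K|^+$ structure: sharp concentration of $N_{m'}$ around its mean when that mean is exponentially large, and a Poisson-like tail $e^{-n(B-K)}$ when the mean is exponentially small. Polynomial factors from the number of types and from the $\min\{1,\cdot\}$ clipping get absorbed in the $-\tfrac{1}{n}\log$ limit, so they do not affect the exponent.
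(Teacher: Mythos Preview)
Your approach is essentially the paper's: both split the achievability analysis by the regime of the sent type ($B^{*}\gtrless K$), use super-exponential Binomial concentration to discard competitors with $B'>B^{*}$ when $B^{*}<K$, and then argue the collapse $\widetilde{V}=V$ (your $V'=V^{*}$) as a separate identity; the converse likewise restricts to a competitor of the \emph{same} conditional type and invokes a constant-probability comparison of two identically-parametrized Binomial clouds. The paper makes two steps explicit that your sketch leaves informal: an $\epsilon$-separation between $B'$ and $B^{*}$ (removed afterwards via convexity/lower-semicontinuity in $\epsilon$) to make the ``$f=+\infty$'' claim rigorous at the boundary, and a dedicated lemma showing $\Pr\{N_{m'}\geq N_{m}\}\geq c>0$ uniformly in $n$ when the two clouds share the same parameter $e^{-nB}$ despite the $+1$ asymmetry in $N_{m}$.
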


\bigskip

We 
prove this theorem by separately deriving matching lower and upper bounds.
For the lower bound,
for $\epsilon \in \mathbb{R}$ let us further define
\begin{align}
{E\mathstrut}_{\!e}(P, R, K, \epsilon) \; & \triangleq \;\,
\;\;\;\;\;\;\;\,
\min\;
\;\;\;\;\;\;
\big\{{E\mathstrut}_{1}(P, R, K, \epsilon), \, {E\mathstrut}_{2}(P, R, K, \epsilon)\big\},
\label{eqMinofTwo} \\
{E\mathstrut}_{1}(P, R, K, \epsilon) \; & \triangleq
\min_{\substack{\\ TV\widetilde{V}:\\
{B\mathstrut}_{T\widetilde{V}}  \; \leq \;
{B\mathstrut}_{TV} \, + \, \epsilon \; \leq \; K
}}
F(TV, \, T\widetilde{V}),
\label{eqMany} \\
{E\mathstrut}_{2}(P, R, K, \epsilon) \; & \triangleq
\;\;\;\;\,
\min_{\substack{\\ TV\widetilde{V}:\\
{B\mathstrut}_{TV} \, + \, \epsilon \; \geq \; K}}
\;\;\;\;\,
F(TV, \, T\widetilde{V}),
\label{eqSingle} \\
F(TV, \, T\widetilde{V}) \; & \triangleq \,
D(TV \, \| \, PW) + \big|{A\mathstrut}_{T\widetilde{V}} - R \, +
\,\big|{B\mathstrut}_{\!T\widetilde{V}} - K \big|^{+}\big|^{+}
,
\label{eqF}
\end{align}
where
${A\mathstrut}_{T\widetilde{V}}$ and ${B\mathstrut}_{TV}$ are defined by (\ref{eqDefs}). 
Our lower bound is given by Lemma~\ref{thm1}, together with Lemmas~\ref{lemmaZeroEps} and~\ref{lemma1} below.

\bigskip

\begin{lemma}[Lower bound] \label{thm1}
\begin{equation} \label{eqPerror}
\liminf_{n\,\rightarrow\,\infty} \,-\tfrac{1}{n}\log 
\,\overline{\!{P\mathstrut}}{\mathstrut}_{\!e}^{\,(n)}
\;\; \geq \;\;
\lim_{\;\;\epsilon \, \searrow \, {0\mathstrut}^{+}}
{E\mathstrut}_{\!e}(P, R, K, \epsilon),
\end{equation}
{\em where ${E\mathstrut}_{\!e}(P, R, K, \epsilon)$ is defined in (\ref{eqMinofTwo})-(\ref{eqF}).}
\end{lemma}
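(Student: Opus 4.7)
The plan is to upper bound $\overline{P}_{e}^{(n)}$ by decomposing on the joint type $TV$ of the transmitted codeword ${\bf x}_1$ (WLOG message $1$ is sent) and the received vector $\widehat{\bf y}$, then splitting the analysis according to whether the cloud of ${\bf x}_1$ contains exponentially many replicas of $\widehat{\bf y}$ (regime ${E\mathstrut}_{1}$) or essentially only the single received replica (regime ${E\mathstrut}_{2}$). Binomial concentration of the replica counts drives the two sub-exponents and explains the $\epsilon$-slack.

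The preliminary observation I would emphasize first is that, although $\widehat{\bf y}$ is picked uniformly from the cloud of ${\bf x}_1$, by symmetry the joint law of $({\bf x}_1, \widehat{\bf y})$ is exactly $P^{n} W^{n}$. By the method of types this yields $\Pr(\text{joint type of }({\bf x}_1,\widehat{\bf y}) = TV) \leq e^{-n D(TV\|PW)}$, which is the $D(TV\|PW)$ summand of $F$. Conditioning on $({\bf x}_1,\widehat{\bf y})$ of joint type $TV$, one computes $W^n(\widehat{\bf y}\,|\,{\bf x}_1) = e^{-n B_{TV}}$. Let $N_1$ be the number of replicas of $\widehat{\bf y}$ in the cloud of ${\bf x}_1$ and $N_{m'}$ the count for a competing codeword ${\bf x}_{m'}$ having joint type $T\widetilde{V}$ with $\widehat{\bf y}$; then $N_1 - 1 \sim \mathrm{Binom}(\lfloor e^{nK}\rfloor - 1,\, e^{-n B_{TV}})$ and $N_{m'} \sim \mathrm{Binom}(\lfloor e^{nK}\rfloor,\, e^{-n B_{T\widetilde{V}}})$, and the ML decoder errs only if $N_{m'} \geq N_1$ for some $m' \neq 1$.

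For regime ${E\mathstrut}_{1}$, with $B_{TV} + \epsilon \leq K$, the mean of $N_1 - 1$ is at least $e^{n\epsilon}$, so a Chernoff bound gives $N_1 \geq \tfrac{1}{2} e^{n(K - B_{TV})}$ outside a doubly exponentially small event. A matching upper-tail Chernoff on $N_{m'}$ shows that any competitor with $B_{T\widetilde{V}} > B_{TV} + \epsilon$ satisfies $N_{m'} < \tfrac{1}{2} e^{n(K - B_{TV})}$ with probability $1 - e^{-\Omega(\exp(n\epsilon))}$; a union bound over the at most $\lfloor e^{nR}\rfloor$ competing codewords absorbs this contribution. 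Hence only competitors with $B_{T\widetilde{V}} \leq B_{TV} + \epsilon \leq K$ can trigger an error, and there $|B_{T\widetilde{V}} - K|^{+} = 0$, so $F(TV, T\widetilde{V}) = D(TV\|PW) + |A_{T\widetilde{V}} - R|^{+}$. A standard random-coding union bound --- Markov's inequality on the count of codewords with conditional type $\widetilde{V}$ given $\widehat{\bf y}$, whose expectation is $\lfloor e^{nR}\rfloor \cdot e^{-n A_{T\widetilde{V}}}$ --- then supplies the missing $|A_{T\widetilde{V}} - R|^{+}$ factor.

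For regime ${E\mathstrut}_{2}$, with $B_{TV} + \epsilon \geq K$, one simply uses $N_1 \geq 1$ and bounds the error probability by $\Pr(\exists\, m' \neq 1 : N_{m'} \geq 1)$. A union bound inside each cloud gives $\Pr(N_{m'} \geq 1 \mid T\widetilde{V}) \leq e^{-n |B_{T\widetilde{V}} - K|^{+}}$; averaging over ${\bf x}_{m'} \sim P^{n}$ contributes the $e^{-n A_{T\widetilde{V}}}$ factor, and union-bounding over $\lfloor e^{nR}\rfloor$ codewords produces $e^{-n |A_{T\widetilde{V}} - R + |B_{T\widetilde{V}} - K|^{+}|^{+}}$, exactly the remaining part of $F$. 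Summing the polynomially many joint type pairs yields $\overline{P}_{e}^{(n)} \leq \mathrm{poly}(n)\, e^{-n\, {E\mathstrut}_{e}(P, R, K, \epsilon)}$, and taking $\liminf -\tfrac{1}{n}\log$ followed by $\epsilon \searrow 0^{+}$ closes the argument. The main technical obstacle is the uniform Chernoff step in regime ${E\mathstrut}_{1}$: the condition $B_{TV} + \epsilon \leq K$ is precisely what makes the replica counts doubly-exponentially concentrated, so that the union bound over $\lfloor e^{nR}\rfloor$ codewords remains harmless and the sub-cases $B_{T\widetilde{V}} \lessgtr B_{TV} + \epsilon$ decouple cleanly.
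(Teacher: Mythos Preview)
Your proposal is correct and follows essentially the same route as the paper's proof: the same decomposition on the joint type $TV$ of the sent--received pair, the same two-regime split according to whether $B_{TV}+\epsilon$ lies below or above $K$, the same binomial (Chernoff/Markov) concentration step to show that in the ``many-replicas'' regime only competitors with $B_{T\widetilde{V}}\leq B_{TV}+\epsilon$ survive, and the same union bounds yielding the $|A_{T\widetilde{V}}-R+|B_{T\widetilde{V}}-K|^{+}|^{+}$ term in the ``single-replica'' regime. The paper packages the concentration step as its Lemma~\ref{lemmaA} (super-exponential bounds) and uses thresholds at $\epsilon/2$ rather than $\tfrac{1}{2}e^{n(K-B_{TV})}$, but this is only a cosmetic difference.
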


\bigskip

In the proof of Lemma~\ref{thm1} we use the following auxiliary lemma:

\bigskip

\begin{lemma}[Super-exponential bounds] \label{lemmaA}

{\em Let ${Z\mathstrut}_{\!i}\,$, $i = 1$, $2$, ... , be i.i.d. \text{Bernoulli}($e^{-nB}$) random variables. Then for any $\delta > 0$}
\begin{align}
\Pr \Bigg\{\sum_{i \, = \, 1}^{\lfloor e^{nK}\rfloor}{Z\mathstrut}_{\!i}
\;
\geq
\;\; e^{n\big[|K\,-\,B|^{+} +\,\delta\big]}\Bigg\}
\; & \leq \;
\exp\bigg\{ - \, e^{n\big[|K\,-\,B|^{+} +\,\delta \,+\,o(1)\big]}\bigg\},
\label{eqIncor} \\
\Pr \Bigg\{\sum_{i \, = \, 1}^{\lfloor e^{nK}\rfloor}{Z\mathstrut}_{\!i}
\;
\leq
\;\; e^{n\big[K\,-\,B \, - \,\delta\big]}\Bigg\}
\; & \leq \;
\exp\bigg\{ - \, e^{n\big[K\,-\,B \,+\,o(1)\big]}\bigg\},
\label{eqCor}
\end{align}
{\em where $o(1)$ is a function of $(\delta, K)$ that satisfies $o(1) \rightarrow 0$ as $n \rightarrow \infty$.}
\end{lemma}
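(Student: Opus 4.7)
The plan is to obtain both bounds via Chernoff-type inequalities applied to the sum $S_n := \sum_{i=1}^{\lfloor e^{nK}\rfloor} Z_i$, optimized over a tilt parameter. Let $N := \lfloor e^{nK}\rfloor$ and $p := e^{-nB}$, so $\mathbb{E}[S_n] = Np$ is exponentially of order $e^{n(K-B)}$; the threshold $|K-B|^{+} + \delta$ in \eqref{eqIncor} lies strictly above $K - B$, while the threshold $K - B - \delta$ in \eqref{eqCor} lies strictly below it, placing us in the super-exponential tail regime on both sides.

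For \eqref{eqIncor}, I would begin from
\[
\Pr(S_n \geq a) \;\leq\; \inf_{s\,>\,0} e^{-sa}(1 - p + pe^{s})^{N} \;\leq\; \inf_{s\,>\,0} \exp\{Np(e^{s} - 1) - sa\},
\]
using $1 + x \leq e^{x}$, and then choose $e^{s} = a/(Np)$ (valid since $a \geq Np$ for large $n$). This yields the familiar estimate $\log\Pr(S_n \geq a) \leq a - a\log\!\left(a/(Np)\right) - Np$. Substituting $a = e^{n[|K-B|^{+} + \delta]}$, the factor $\log(a/(Np))$ equals $n\delta$ if $K \geq B$ and $n(B - K + \delta)$ if $K < B$; in both regimes it is a linear-in-$n$ prefactor that can be written as $e^{n \cdot o(1)}$. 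Hence the dominant term of the log-probability is $-a\log(a/(Np)) = -e^{n[|K-B|^{+} + \delta + o(1)]}$, the subtracted $Np$ being exponentially smaller.

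For \eqref{eqCor}, I would run the symmetric argument $\Pr(S_n \leq a) \leq \inf_{s > 0} e^{sa}(1 - p + pe^{-s})^{N} \leq \inf_{s > 0} \exp\{-Np(1 - e^{-s}) + sa\}$, optimizing at $e^{-s} = a/(Np)$ to obtain $\log\Pr(S_n \leq a) \leq a - Np + a\log(Np/a)$. With $a = e^{n(K - B - \delta)}$ one checks $Np - a = e^{n(K-B)}(1 - e^{-n\delta})$ and $a\log(Np/a) = n\delta \cdot e^{n(K-B-\delta)}$, both of which differ from $Np$ by factors of the form $e^{n\cdot o(1)}$. The exponent is thus dominated by $-Np = -e^{n(K-B+o(1))}$, matching the claimed form. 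When $K \leq B$, the right-hand side of \eqref{eqCor} is $\geq 1$ and the bound is trivial.

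The main obstacle I anticipate is bookkeeping rather than conceptual: I must verify that all polynomial prefactors (the $n\delta$ factor, the rounding of $N$ from $e^{nK}$ to $\lfloor e^{nK}\rfloor$, and the subdominant $\pm Np$ contributions) can be absorbed into a single $e^{n\cdot o(1)}$ factor that depends only on $(\delta, K)$ and vanishes as $n \to \infty$ independently of $B$. Once this absorption is made precise, the two inequalities follow directly from the Chernoff computations above.
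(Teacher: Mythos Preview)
Your proposal is correct and follows the same Chernoff-bound strategy as the paper: Markov's inequality applied to $e^{\pm S_n}$ combined with $1+t\le e^{t}$. The paper simply fixes the tilt at $s=1$ rather than optimizing, which already suffices for the stated $o(1)$ form and sidesteps the $B$-dependent prefactor you flagged in your last paragraph.
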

\begin{proof}
The result follows straightforwardly from
Markov's inequality for the random variable $e^{\sum_{i}{Z\mathstrut}_{\!i}}$,
resp. $e^{-\sum_{i}{Z\mathstrut}_{\!i}}$, and the inequality $1 + t \leq e^{t}$.
\end{proof}

\bigskip

{\em Proof of Lemma~\ref{thm1}:}
We will use $\epsilon > 0$ to establish 
(\ref{eqPerror}).

Let ${\bf x}$ be the sent codeword and $\widehat{\bf y}$ be the received vector.
The cloud of ${\bf x}$ can 
contain
more than one vector $\widehat{\bf y}$.
The maximum-likelihood decoder makes an error with non-vanishing probability $\geq \frac{1}{2}$,
if there exists an incorrect codeword 
(not necessarily distinct from ${\bf x}$, but representing a different message and having therefore an independently generated cloud) with the same or a higher number of vectors $\widehat{\bf y}$ in its cloud,
compared to
the sent codeword 
itself. Otherwise there is no error.

Consider an event 
where
${\bf x}$ and $\widehat{\bf y}$
have a joint empirical distribution (type with denominator $n$) $TV$, i.e., $TV \in {{\cal P}\mathstrut}_{\!n}({\cal Y}\times {\cal X})$,
where $T$ is a distribution on ${\cal Y}$ and $V$ is a conditional distribution on ${\cal X}$ given a letter from ${\cal Y}$.
The 
exponent of the probability of 
this event (probability of type class in \cite{CoverThomas}) is given by 
\begin{displaymath}
D(TV \, \| \, PW) + o(1),
\end{displaymath}
where the term $o(1)$ vanishes uniformly w.r.t. $TV$, as $n\rightarrow \infty$.

Consider now the competing codewords. The exponent of the probability of an event, that $\widehat{\bf y}$ appears 
somewhere in the 
clouds corresponding to 
the incorrect codewords,
is given by:
\begin{equation} \label{eqAtLeastOne}
\min_{\substack{\\\widetilde{V}: \;\; T\widetilde{V}\,\in\,{{\cal P}\mathstrut}_{\!n}({\cal Y}\,\times\, {\cal X})} }\;
\Big\{
\big|{A\mathstrut}_{T\widetilde{V}} - R + \big|{B\mathstrut}_{T\widetilde{V}} - K \big|^{+}\big|^{+}\Big\} + o(1),
\end{equation}
where $o(1)$ is uniform w.r.t. $T$.
To see this,
consider a possibly different (from $V$) conditional type $\widetilde{V}$ of some $\widetilde{\bf x}$
w.r.t. $\widehat{\bf y}$.
The exponent of the probability of an event, that a certain incorrect codeword belongs to the conditional type 
$\widetilde{V}$ given $\widehat{\bf y}$,
is given by 
\begin{equation} \label{eqExponentA}
{A\mathstrut}_{T\widetilde{V}} + o(1),
\end{equation}
where $o(1)$ is uniform w.r.t. $T\widetilde{V}$. The exponent of the probability of an event, that a certain ${\bf y}$ in the cloud of $\widetilde{\bf x}$
equals $\widehat{\bf y}$, is given by ${B\mathstrut}_{T\widetilde{V}}\,$.
The exponent of the probability of an event, that in the cloud of $\widetilde{\bf x}$ of the type $\widetilde{V}$ 
the vector $\widehat{\bf y}$ appears at least once,
is given by
\begin{equation} \label{eqExponentK}
\big|{B\mathstrut}_{T\widetilde{V}} - K\big|^{+} + o(1),
\end{equation}
where the term $o(1)$, 
vanishing as $n \rightarrow \infty$,
depends on
$K$.
In particular, as a {\em lower} bound on the exponent, (\ref{eqExponentK}) follows trivially without $o(1)$ from the union bound on the probability.
Whereas
to confirm (\ref{eqExponentK}) as an {\em upper} bound on the exponent,
denoting $e^{-n{B\mathstrut}_{T\widetilde{V}}}\triangleq \alpha$ and $\lfloor e^{nK}\rfloor \triangleq M \equiv \beta^{-1}$,
we can write similarly to \cite[Eq.~14]{Gallager73}:
\begin{align}
& \Pr\big\{\text{$\widehat{\bf y}$ is in the cloud of $\widetilde{\bf x}$}\big\} \;\;
= \;\; 1 - (1 - \alpha)^{M}
\nonumber \\
& \equiv \;\; \alpha \sum_{j\,=\,0}^{M - 1} {(1 - \alpha)\mathstrut}^{\,j}
\;\; \geq \;\; \min \,\{\alpha, \, \beta\} \sum_{j\,=\,0}^{M - 1} {(1 - \beta)\mathstrut}^{\,j}
\nonumber \\
&
\;\;\;\;\;\;\;\;\;\;\;\;\;\;\;\;\;\;\;\;\;\;\;\;\;\;\;\;\;\;\,\,
\equiv \;\; \min \,\{\alpha \beta^{-1}, \,1\} \,\big[1 - \underbrace{(1 - \beta)^{M}\!}_{\leq \; 1/e}\big]
\nonumber \\
&
\;\;\;\;\;\;\;\;\;\;\;\;\;\;\;\;\;\;\;\;\;\;\;\;\;\;\;\;\;\;\,\,
= \;\; e^{-n(|{B\mathstrut}_{T\widetilde{V}} \,-\, K|^{+} \, + \, o(1))},
\nonumber
\end{align}
where $o(1)$ is a function of $K$.
Adding together (\ref{eqExponentA}) and (\ref{eqExponentK}),
we obtain
the exponent of the probability of an event, that a certain incorrect codeword is of the conditional type $\widetilde{V}$ w.r.t. $\widehat{\bf y}$,
{\em and} 
$\widehat{\bf y}$ appears at least once
in its cloud:
\begin{equation} \label{eqExponentS}
{A\mathstrut}_{T\widetilde{V}} + \big|{B\mathstrut}_{T\widetilde{V}} - K \,\big|^{+} + o(1),
\end{equation}
where $o(1)$ is uniform w.r.t. $T\widetilde{V}$.
Finally, the exponent of the probability of an event, that in the codebook there exists at least one incorrect codeword of the conditional type $\widetilde{V}$ w.r.t. $\widehat{\bf y}$,
and $\widehat{\bf y}$ appears at least once
in its cloud, is given by
\begin{equation} \label{eqExponentF}
\big|{A\mathstrut}_{T\widetilde{V}} + \big|{B\mathstrut}_{T\widetilde{V}} - K \,\big|^{+}- R \,\big|^{+} + o(1),
\end{equation}
where $o(1)\rightarrow 0$ uniformly w.r.t. $T\widetilde{V}$ as $n \rightarrow \infty$ and may depend on $K$ and $R$,
which yields (\ref{eqAtLeastOne}).

Suppose that $K - {B\mathstrut}_{TV} \leq \epsilon$. In this case 
the exponent of the conditional probability
of error given
that the received vector and the sent codeword
belong to the joint type
$TV$ can be lower bounded
by (\ref{eqAtLeastOne}), and the exponent of the (unconditional) probability of error due to all such cases is lower-bounded by
\begin{equation} \label{eqAtLeastOneOverall}
\min_{\substack{\\ TV\widetilde{V}: \\
TV, \, T\widetilde{V}\;\in\;{{\cal P}\mathstrut}_{\!n}({\cal Y}\,\times\, {\cal X}),
\\
K \, - \, {B\mathstrut}_{TV} \; \leq \; \epsilon}}
\!\!
\Big\{
D(TV \, \| \, PW) + \big|{A\mathstrut}_{T\widetilde{V}}-R + \big|{B\mathstrut}_{T\widetilde{V}} - K \big|^{+}\big|^{+}\Big\}
+ o(1).
\end{equation}

Consider now the opposite case when $K - {B\mathstrut}_{TV} \geq \epsilon$.
For this case, recall that the exponent of the probability of an event, that 
there
exists at least one incorrect codeword of the conditional type $\widetilde{V}$ w.r.t. $\widehat{\bf y}$,
is given by $\big|{A\mathstrut}_{T\widetilde{V}} - R\,\big|^{+} + o(1)$.
Suppose now that the conditional type $\widetilde{V}$ is such that $K - {B\mathstrut}_{T\widetilde{V}} \leq K - {B\mathstrut}_{TV} - \epsilon$.
For this case
we use
Lemma~\ref{lemmaA},
with $\delta = \epsilon/2$.
Using (\ref{eqCor}) for the correct cloud and (\ref{eqIncor}) for the competing clouds,
the probability of the event that the cloud of an incorrect codeword of the type $\widetilde{V}$ has
at least as many occurrences of the vector $\widehat{\bf y}$, compared to the correct codeword of the type $V$,
can be upper-bounded uniformly by
\begin{equation} \label{eqSuperExp}
\underbrace{\exp\big\{\! - e^{n(\epsilon + o(1))}\big\}}_{\text{the correct cloud}} \, + \,
\underbrace{\exp\big\{nR -e^{n(\epsilon/2 + o(1))}\big\}}_{\text{the competing clouds}}.
\end{equation}
That is, it tends to zero {\em super}-exponentially fast with $n$.
The remaining types $\widetilde{V}$ with $K - {B\mathstrut}_{T\widetilde{V}} \geq K - {B\mathstrut}_{TV} - \epsilon$
allow us to write a lower bound on the exponent of the (unconditional) probability of error due to all the cases $K - {B\mathstrut}_{TV} \geq \epsilon$,
as
\begin{equation} \label{eqManyOneOverall}
\min_{\substack{\\ TV\widetilde{V}:\\
TV, \, T\widetilde{V}\;\in\;{{\cal P}\mathstrut}_{\!n}({\cal Y}\,\times\, {\cal X}),
\\
K \, - \, {B\mathstrut}_{T\widetilde{V}} \; \geq \; 
K \, - \, {B\mathstrut}_{TV} \, - \, \epsilon \; \geq \; 0}}
\!\!
\Big\{
D(TV \, \| \, PW) + \big|{A\mathstrut}_{T\widetilde{V}} - R \, \big|^{+}\Big\} + o(1).
\end{equation}
Together, (\ref{eqAtLeastOneOverall}) and (\ref{eqManyOneOverall}) cover all cases and the minimum between the two gives the lower bound on the error exponent.

Observe that the objective function of (\ref{eqAtLeastOneOverall}) can be used also in (\ref{eqManyOneOverall}), because
in (\ref{eqManyOneOverall}) the set over which the minimization is performed
satisfies
${B\mathstrut}_{T\widetilde{V}} \leq K$.
Furthermore, for the lower bound, we can simply extend the minimization set in (\ref{eqAtLeastOneOverall}) and (\ref{eqManyOneOverall}) from types to arbitrary distributions
$TV$ and $T\widetilde{V}$.
Therefore,
omitting $o(1)$, in the limit of a large $n$ we can replace the minimum of the bounds (\ref{eqAtLeastOneOverall}) and (\ref{eqManyOneOverall}) with
(\ref{eqMinofTwo}). $\square$

\bigskip

To complete the lower bound given by Lemma~\ref{thm1}, we establish the next two lemmas.

\bigskip

\begin{lemma}[Epsilon equals zero] \label{lemmaZeroEps}
{\em The expression defined in (\ref{eqMinofTwo})-(\ref{eqF}) satisfies}
\begin{equation} \label{eqConvergeFromBelow}
\lim_{\;\;\epsilon \, \searrow \, {0\mathstrut}^{+}}
{E\mathstrut}_{\!e}(P, R, K, \epsilon)
\;\; = \;\; {E\mathstrut}_{\!e}(P, R, K, 0).
\end{equation}
\end{lemma}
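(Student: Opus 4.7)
The plan is a standard compactness-and-continuity argument. First I would observe that $E_e(P,R,K,\epsilon)$ is monotone non-increasing in $\epsilon \ge 0$: for $\epsilon' < \epsilon$, any pair $(TV, T\widetilde V)$ in the $E_1(\epsilon')$-set either satisfies the $E_1(\epsilon)$-constraints directly, or, when $B_{TV} + \epsilon > K$, falls into the $E_2(\epsilon)$-set; and the $E_2(\epsilon')$-set is contained in the $E_2(\epsilon)$-set since $B_{TV} \ge K - \epsilon' > K - \epsilon$. Hence the limit in (\ref{eqConvergeFromBelow}) exists, and the task reduces to bracketing it between $E_e(P,R,K,0)$ from above and below.

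The \emph{upper bound} $\lim_{\epsilon \searrow 0^+} E_e(P,R,K,\epsilon) \le E_e(P,R,K,0)$ follows from a direct feasibility check: any pair $(TV, T\widetilde V)$ feasible at $\epsilon = 0$ remains feasible for all sufficiently small $\epsilon > 0$. A pair in the $E_1(0)$-set with $B_{TV} < K$ lies in the $E_1(\epsilon)$-set as soon as $\epsilon \le K - B_{TV}$; a pair with $B_{TV} = K$ lies in the $E_2(\epsilon)$-set for every $\epsilon > 0$; and a pair in the $E_2(0)$-set trivially satisfies $B_{TV} + \epsilon \ge K$. Applied to an $\eta$-near-minimizer of $E_e(\cdot,\cdot,\cdot,0)$, this yields $E_e(P,R,K,\epsilon) \le E_e(P,R,K,0) + \eta$ for all $\epsilon$ below a threshold depending on $\eta$.

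For the matching \emph{lower bound}, I would invoke compactness. Along a sequence $\epsilon_n \searrow 0^+$, let $(T_n V_n, T_n \widetilde V_n)$ attain $E_e(P,R,K,\epsilon_n)$ (the minimum is attained on the closed subset of the compact probability simplex where $F$ is finite) and extract a convergent subsequence with limit $(T^* V^*, T^* \widetilde V^*)$. By the pigeonhole principle I may assume, after further thinning, that all the minimizers lie either in $E_1(\epsilon_n)$-sets or all in $E_2(\epsilon_n)$-sets; the defining inequalities, expressed through continuous functions of the distributions, pass to the limit to give either $B_{T^* \widetilde V^*} \le B_{T^* V^*} \le K$ or $B_{T^* V^*} \ge K$. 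Hence $(T^* V^*, T^* \widetilde V^*)$ is feasible for $E_e(P,R,K,0)$, and continuity of $F$ delivers
\[
E_e(P,R,K,0) \;\le\; F(T^* V^*, T^* \widetilde V^*) \;=\; \lim_n F(T_n V_n, T_n \widetilde V_n) \;=\; \lim_{\epsilon \searrow 0^+} E_e(P,R,K,\epsilon).
\]

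The chief technical obstacle is that $B_{TV}$, $A_{T\widetilde V}$, and $D(TV\,\|\,PW)$ can each attain $+\infty$ when $W$ or $P$ have zeros, which would in principle invalidate naive continuity arguments. This is circumvented by observing that any pair with finite $F$ must have the support of $TV$ contained in that of $PW$ and the support of $T\widetilde V$ contained in that of $W$; restricting the minimization to this closed subset of the simplex makes $A$, $B$, and $D$ genuine continuous functions and preserves all feasible sets, so the minimizers and the limit point remain in a compact continuous domain and the compactness/continuity reasoning goes through unchanged.
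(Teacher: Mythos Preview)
Your proof is correct and takes a genuinely different route from the paper's. The paper argues at the level of the value functions: it shows that $E_1(P,R,K,\epsilon)$ and $E_2(P,R,K,\epsilon)$ are each convex and lower semi-continuous in $\epsilon\in\mathbb{R}$, and then uses the fact that a convex lower-semicontinuous function on the line is right-continuous wherever it is finite, handling the two cases $E_2(0)\le E_1(0)$ and $E_1(0)<E_2(0)$ separately. You instead work directly with the minimizers: monotonicity in $\epsilon$ gives the upper bound (your separate ``upper bound'' paragraph is in fact already subsumed by the monotonicity you established), and for the lower bound you extract a convergent subsequence of minimizers and pass to the limit in the constraints. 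Your approach is more elementary in that it avoids convexity entirely; the paper's approach is slightly more structural and would generalize to settings where minimizers are not so readily available.

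One small technical slip: in the last paragraph you restrict $T\widetilde V$ to have support contained in that of $W$, which controls $B_{T\widetilde V}$, but finiteness of $A_{T\widetilde V}=D(\widetilde V\,\|\,P\,|\,T)$ additionally requires the $x$-support of $T\widetilde V$ to lie in that of $P$. The fix is immediate---simply require $\mathrm{supp}(T\widetilde V)\subset\mathrm{supp}(PW)$ as you already do for $TV$---and the rest of your compactness/continuity argument is unaffected.
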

\begin{proof}
Observe first that both (\ref{eqMany}) and (\ref{eqSingle}) are convex ($\cup$) functions of $\epsilon \in \mathbb{R}$.
This can be verified directly by the definition of convexity, using the property that $F(TV, \,T\widetilde{V})$ is convex ($\cup$)
and ${B\mathstrut}_{TV}$ is linear in the pair $(TV, \,T\widetilde{V})$.
Furthermore, by continuity of $F(TV, \,T\widetilde{V})$ and ${B\mathstrut}_{TV}$, it follows that (\ref{eqMany}) and (\ref{eqSingle}) are {\em lower semi-continuous}
functions of $\epsilon \in \mathbb{R}$.
Observe next from (\ref{eqMany}) and (\ref{eqSingle}) that at least one of them is necessarily finite at $\epsilon = 0$, i.e., ${E\mathstrut}_{\!e}(P, R, K, 0) < +\infty$.
Suppose that ${E\mathstrut}_{2}(P, R, K, 0) \leq {E\mathstrut}_{1}(P, R, K, 0)$.
Then ${E\mathstrut}_{2}(P, R, K, \epsilon)$ is finite for $\epsilon \geq 0$ and by the lower semi-continuity of the convex function $\lim_{\,\epsilon \,\searrow\, 0}{E\mathstrut}_{2}(P, R, K, \epsilon) = {E\mathstrut}_{\!e}(P, R, K, 0)$. Then we also obtain (\ref{eqConvergeFromBelow}).
Consider the opposite case ${E\mathstrut}_{1}(P, R, K, 0) < {E\mathstrut}_{2}(P, R, K, 0)$.
Then (\ref{eqMany}) at $\epsilon = 0$ is a minimization of a continuous function of $TV\widetilde{V}$ over a closed non-empty set.
Let $\hat{T}\hat{V}$ be the distribution $TV$, achieving the minimum in (\ref{eqMany}) at $\epsilon = 0$.
Then necessarily $K > {B\mathstrut}_{\hat{T}\hat{V}}$ (otherwise with $K = {B\mathstrut}_{\hat{T}\hat{V}}$ there has to be
${E\mathstrut}_{1}(P, R, K, 0) \geq {E\mathstrut}_{2}(P, R, K, 0)$).
Then ${E\mathstrut}_{1}(P, R, K, \epsilon)$ is finite for $K - {B\mathstrut}_{\hat{T}\hat{V}} > \epsilon \geq 0$
and by the lower semi-continuity of the convex function $\lim_{\,\epsilon \,\searrow\, 0}{E\mathstrut}_{1}(P, R, K, \epsilon) = {E\mathstrut}_{\!e}(P, R, K, 0)$.
Then again we obtain (\ref{eqConvergeFromBelow}).
\end{proof}

\bigskip

{\em Remark:}

Alternatively, the achievable result of Lemmas~\ref{thm1},~\ref{lemmaZeroEps}
can also be obtained using a suboptimal decoder,
which does not count the exact number of replicas of the received vector $\widehat{\bf y}$ in each cloud:
\begin{equation} \label{eqAlternativeDec}
\widehat{m} \; = \; \underset{\substack{m: \;\text{at least one $\widehat{\bf y}$ in the cloud}}}{\arg\max} \,
\max
\big\{\!-\!K, \, -{B\mathstrut}_{TV}\big\},
\end{equation}
where $TV$ denotes the joint type of the pair $(\widehat{\bf y}, {\bf x\mathstrut}_{m})$.

\bigskip

\begin{lemma}[Identity] \label{lemma1}
\begin{align}
& {E\mathstrut}_{\!e}(P, R, K, 0) \; \equiv \; {E\mathstrut}_{\!e}(P, R, K),
\label{eqOneMin}
\end{align}
{\em where the LHS and the RHS are defined by (\ref{eqMinofTwo})-(\ref{eqF}) and (\ref{eqOneDef}), respectively.}
\end{lemma}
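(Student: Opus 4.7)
My plan is to prove the identity $E_e(P, R, K, 0) = E_e(P, R, K)$ by two matching inequalities. The workhorse is the elementary identity
\[
D(TV \, \| \, PW) \;\; = \;\; {A\mathstrut}_{TV} + {B\mathstrut}_{TV} - {H\mathstrut}_{\!T},
\]
obtained by expanding the logarithm in the definition of Kullback--Leibler divergence, which converts differences of $D$ (at the same marginal $T$) into differences of $A+B$.

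The inequality $E_e(P, R, K, 0) \leq E_e(P, R, K)$ is immediate. For any $TV$, I insert the diagonal pair with $\widetilde{V} = V$: either ${B\mathstrut}_{TV} \leq K$, making the pair feasible in $E_1$, or ${B\mathstrut}_{TV} \geq K$, making it feasible in $E_2$. Since $F(TV, TV)$ coincides with the objective of (\ref{eqOneDef}), taking the minimum over $TV$ gives the desired inequality.

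For the reverse inequality, I will show that for every pair $(TV, T\widetilde{V})$ feasible in $E_1$ or $E_2$ one of the two diagonal replacements $V^{\star} \in \{V, \widetilde{V}\}$ achieves $F(TV^{\star}, TV^{\star}) \leq F(TV, T\widetilde{V})$, which then provides an admissible candidate for the minimization defining $E_e(P, R, K)$. The case split is on the sign of $D(T\widetilde{V} \, \| \, PW) - D(TV \, \| \, PW)$: when this is non-positive I choose $V^{\star} = \widetilde{V}$, so that both $F$-values share the same $G$-term $\big|{A\mathstrut}_{T\widetilde{V}} - R + \big|{B\mathstrut}_{T\widetilde{V}} - K\big|^{+}\big|^{+}$ and the claim reduces to the sign hypothesis; otherwise I choose $V^{\star} = V$, reducing the claim to the $G$-comparison $\big|{A\mathstrut}_{TV} - R + \big|{B\mathstrut}_{TV} - K\big|^{+}\big|^{+} \leq \big|{A\mathstrut}_{T\widetilde{V}} - R + \big|{B\mathstrut}_{T\widetilde{V}} - K\big|^{+}\big|^{+}$. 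The $D$-identity turns the hypothesis $D(T\widetilde{V}) > D(TV)$ into ${A\mathstrut}_{T\widetilde{V}} + {B\mathstrut}_{T\widetilde{V}} > {A\mathstrut}_{TV} + {B\mathstrut}_{TV}$, and combined with the feasibility constraints of $E_1$ (${B\mathstrut}_{T\widetilde{V}} \leq {B\mathstrut}_{TV} \leq K$) or $E_2$ (${B\mathstrut}_{TV} \geq K$) the comparison follows by monotonicity of $|\cdot|^{+}$.

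The main obstacle is the $E_2$ branch in the sub-case ${B\mathstrut}_{T\widetilde{V}} \leq K$, where the two $G$-terms take the asymmetric forms $\big|{A\mathstrut}_{TV} + {B\mathstrut}_{TV} - K - R\big|^{+}$ and $\big|{A\mathstrut}_{T\widetilde{V}} - R\big|^{+}$. Closing the comparison there requires chaining the strict inequality ${A\mathstrut}_{T\widetilde{V}} - {A\mathstrut}_{TV} > {B\mathstrut}_{TV} - {B\mathstrut}_{T\widetilde{V}}$ (from the identity) with ${B\mathstrut}_{T\widetilde{V}} \leq K$ to yield ${A\mathstrut}_{T\widetilde{V}} > {A\mathstrut}_{TV} + {B\mathstrut}_{TV} - K$, which is exactly the ingredient needed to relate the two $G$-terms through the monotonicity of $|\cdot|^{+}$.
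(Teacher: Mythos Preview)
Your proof is correct and follows essentially the same approach as the paper: both use the identity $D(TV\,\|\,PW) = A_{TV} + B_{TV} - H_T$, both split on the sign of $(A_{TV}+B_{TV}) - (A_{T\widetilde V}+B_{T\widetilde V})$ (which is exactly your split on $D(T\widetilde V\,\|\,PW)-D(TV\,\|\,PW)$), and both conclude by showing that one of the diagonal replacements $V^{\star}\in\{V,\widetilde V\}$ does not increase the objective. The only difference is organizational: the paper first merges the $E_1$- and $E_2$-constraints at $\epsilon=0$ into the single condition $K - B_{TV} \le |K-B_{T\widetilde V}|^{+}$ and rewrites $|B_{T\widetilde V}-K|^{+}$ via $|t|^{+}=t+|-t|^{+}$, which lets the two-case argument go through uniformly and spares your separate $E_1$/$E_2$ sub-case analysis (in particular the ``obstacle'' sub-case you single out).
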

\begin{proof}
For $\epsilon = 0$, we can conveniently rewrite the
minimum (\ref{eqMinofTwo}) 
between
(\ref{eqMany}) and (\ref{eqSingle}) 
in the following unified manner:
\begin{align}
& {E\mathstrut}_{\!e}(P, R, K, 0) \; =
\min_{\substack{\\ TV\widetilde{V}:\\
K \, - \, {B\mathstrut}_{TV} \; \leq \; {|K \, - \, {B\mathstrut}_{T\widetilde{V}}|\mathstrut}^{+}}}
\Big\{\underbrace{-{H\mathstrut}_{\!T} + {A\mathstrut}_{TV} + {B\mathstrut}_{TV}}_{= \; D(TV\,\|\,PW)}
\nonumber \\
&
\;\;\;\;\;\;\;\;\;\;\;\;\;
+ \big|{A\mathstrut}_{T\widetilde{V}} +
\underbrace{{B\mathstrut}_{T\widetilde{V}}
+ \big|K - {B\mathstrut}_{T\widetilde{V}}\big|^{+} - K}_{= \; {|{B\mathstrut}_{T\widetilde{V}} \, - \, K|\mathstrut}^{+}} - \, R \,\big|^{+}\Big\},
\label{eqSingleMin}
\end{align}
where in the objective function we used also the property that $|\,t\,|^{+} = t + |-t\,|^{+}$.
Now it is convenient to 
verify,
that in (\ref{eqSingleMin}) the conditional distribution $\widetilde{V}$ without loss of optimality can be replaced with $V$.
To this end suppose that some joint distributions $TV$ and $T\widetilde{V}$
satisfy the condition under the minimum of (\ref{eqSingleMin}).

If ${A\mathstrut}_{TV} + {B\mathstrut}_{TV} \leq {A\mathstrut}_{T\widetilde{V}} + {B\mathstrut}_{T\widetilde{V}}$,
then, since also $\big|K - {B\mathstrut}_{TV}\big|^{+} \leq \,\big|K - {B\mathstrut}_{T\widetilde{V}}\big|^{+}$,
we cannot increase the objective function of (\ref{eqSingleMin})
by using $TV$ in place of $T\widetilde{V}$.

If ${A\mathstrut}_{TV} + {B\mathstrut}_{TV} \geq {A\mathstrut}_{T\widetilde{V}} + {B\mathstrut}_{T\widetilde{V}}$,
then we cannot increase the objective function of (\ref{eqSingleMin})
by using $T\widetilde{V}$ in place of $TV$.

It follows that 
(\ref{eqOneDef})
is a lower bound on minimum (\ref{eqSingleMin}).
Finally, since 
(\ref{eqOneDef})
is also an upper bound on (\ref{eqSingleMin}), we conclude that there is equality between 
(\ref{eqOneDef})
and (\ref{eqSingleMin}).
\end{proof}

\bigskip

Combining (\ref{eqOneMin}), (\ref{eqConvergeFromBelow}), and (\ref{eqPerror}), we have that the RHS of (\ref{eqPerrorBoth}) is a lower bound.
It remains to show that it is also an upper bound.

\bigskip

\begin{lemma}[Upper bound] \label{lemmaUp}
\begin{equation} \label{eqPerrorUp}
\limsup_{n\,\rightarrow\,\infty} \,-\tfrac{1}{n}\log 
\,\overline{\!{P\mathstrut}}{\mathstrut}_{\!e}^{\,(n)}
\;\; \leq \;\;
{E\mathstrut}_{\!e}(P, R, K),
\end{equation}
{\em where ${E\mathstrut}_{\!e}(P, R, K)$ is defined in (\ref{eqOneDef}).}
\end{lemma}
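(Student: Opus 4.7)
The plan is to exhibit, for any minimizer of (\ref{eqOneDef}), a dominant error event whose probability is at least $e^{-n({E\mathstrut}_{\!e}(P,R,K)+o(1))}$. By Lemma~\ref{lemma1} the minimum in (\ref{eqOneDef}) coincides with (\ref{eqSingleMin}) in which $\widetilde{V}$ has been taken equal to $V$, so it suffices to fix a single joint distribution $T^{\ast}V^{\ast}$ (which by density of $n$-types in ${\cal P}({\cal Y}\times{\cal X})$ we may take to lie in ${{\cal P}\mathstrut}_{\!n}({\cal Y}\times{\cal X})$ up to $o(1)$ loss in the exponent) and lower bound the error probability by the value of the objective at $T^{\ast}V^{\ast}$.

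The event I focus on is the conjunction of (i) the transmitted codeword ${\bf x}$ and the received vector $\widehat{\bf y}$ jointly belong to the type $T^{\ast}V^{\ast}$; and (ii) given (i), some incorrect codeword has a cloud whose count of $\widehat{\bf y}$ is at least as large as that of the correct cloud. The probability of (i) is at least $e^{-n(D(T^{\ast}V^{\ast}\|PW)+o(1))}$ by the standard type-class estimate. For (ii) I restrict attention to incorrect codewords whose conditional type with respect to $\widehat{\bf y}$ is also $V^{\ast}$: each codeword falls in this conditional class with probability $\geq e^{-n({A\mathstrut}_{T^{\ast}V^{\ast}}+o(1))}$, and its cloud contains $\widehat{\bf y}$ at least once with probability $\geq e^{-n(|{B\mathstrut}_{T^{\ast}V^{\ast}}-K|^{+}+o(1))}$, using the trivial union lower bound invoked around (\ref{eqExponentK}).

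The argument then splits on the sign of $K-{B\mathstrut}_{T^{\ast}V^{\ast}}$. When $K\leq{B\mathstrut}_{T^{\ast}V^{\ast}}$, the correct cloud contains $\widehat{\bf y}$ with multiplicity $O(1)$ (with constant probability), so (ii) reduces to the existence of some competing codeword whose cloud intersects $\widehat{\bf y}$; a standard first/second-moment argument over the $\lfloor e^{nR}\rfloor$ competitors produces a lower bound of $e^{-n(|{A\mathstrut}_{T^{\ast}V^{\ast}}+|{B\mathstrut}_{T^{\ast}V^{\ast}}-K|^{+}-R|^{+}+o(1))}$. When $K>{B\mathstrut}_{T^{\ast}V^{\ast}}$, Lemma~\ref{lemmaA} concentrates both the correct count $N_c$ and every competing count $N_{\rm comp}$ around $e^{n(K-{B\mathstrut}_{T^{\ast}V^{\ast}})}$; since $N_c-1$ and $N_{\rm comp}$ are both distributed as Binomial$\bigl(\lfloor e^{nK}\rfloor,e^{-n{B\mathstrut}_{T^{\ast}V^{\ast}}}\bigr)$ up to a $\pm 1$ shift, exchangeability yields $\Pr(N_{\rm comp}\geq N_c)\geq \tfrac12 - o(1)$, which is lossless on the exponential scale, and a union bound over the $\approx e^{n(R-{A\mathstrut}_{T^{\ast}V^{\ast}})}$ competitors of the correct conditional type recovers the same lower bound on the conditional error probability.

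Multiplying the estimates from Steps (i) and (ii) reproduces the objective of (\ref{eqOneDef}) evaluated at $T^{\ast}V^{\ast}$, which can be chosen to approximate ${E\mathstrut}_{\!e}(P,R,K)$ arbitrarily well, yielding (\ref{eqPerrorUp}). The main obstacle I anticipate is the bookkeeping in the regime $K>{B\mathstrut}_{T^{\ast}V^{\ast}}$: the anti-concentration claim $\Pr(N_{\rm comp}\geq N_c)=\Omega(1)$ is morally a consequence of exchangeability of the two independently generated clouds, but must be reconciled with the forced presence of $\widehat{\bf y}$ in the correct cloud and with the fact that we are simultaneously conditioning on the joint-type event (i). Controlling the error terms uniformly in $T^{\ast}V^{\ast}$ so that all the $o(1)$ losses combine harmlessly is the remaining technical care.
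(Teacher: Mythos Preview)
Your proposal is correct and follows the same overall skeleton as the paper: restrict to competing codewords that share the \emph{same} conditional type $V^{\ast}$ as the correct one, combine the type-class probability $D(T^{\ast}V^{\ast}\|PW)$ with the exponent (\ref{eqExponentF}) for the existence of such a competitor having $\widehat{\bf y}$ in its cloud, and then show that conditioned on this event the ML error probability is bounded away from zero.

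The difference is in how that last step is handled. You split on the sign of $K-{B\mathstrut}_{T^{\ast}V^{\ast}}$: when $K\leq{B\mathstrut}_{T^{\ast}V^{\ast}}$ you argue the correct count equals~$1$ with probability $\geq e^{-1}$ so any competitor with count $\geq 1$ ties; when $K>{B\mathstrut}_{T^{\ast}V^{\ast}}$ you invoke exchangeability of $N_{\rm comp}\sim{\rm B}(M,\alpha)$ and $N_c-1\sim{\rm B}(M-1,\alpha)$ together with binomial anti-concentration. The paper instead proves a single uniform statement, Lemma~\ref{lemmaB}, showing $\Pr\{N\geq N_1+1\,|\,N\geq 1\}\geq\tfrac12[1-1/\sqrt{2\pi}]+o_M(1)$ for \emph{all} $\alpha\in(0,q\,]\cup\{1\}$. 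This is exactly the ``anti-concentration claim'' you flag as the main obstacle; its proof is essentially your exchangeability argument combined with the Stirling bound $\max_{1\leq k\leq M}p(k)\leq 1/\sqrt{2\pi(1-1/M)}+o(1)$, which holds regardless of whether $M\alpha$ is large or small. Packaging this into a separate lemma removes the case split entirely and makes the conditioning transparent: one conditions on the event ``a competitor of conditional type $V$ has $N\geq 1$'' (whose exponent is (\ref{eqJustTV})) and applies Lemma~\ref{lemmaB} directly.

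Two minor points on your write-up: (i) the phrase ``a union bound over the $\approx e^{n(R-A_{T^{\ast}V^{\ast}})}$ competitors'' is misdirected, since union bounds give upper, not lower, bounds --- what you actually need is $\Pr\{\text{at least one competitor of type }V^{\ast}\}\geq e^{-n(|A_{T^{\ast}V^{\ast}}-R|^{+}+o(1))}$ and then the single-competitor comparison; (ii) in the $K\leq B$ case you should say ``multiplicity exactly~$1$'' rather than ``$O(1)$'', since only the former reduces the comparison to mere intersection.
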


\bigskip

In the proof of Lemma~\ref{lemmaUp} we use the following auxiliary lemma:

\bigskip

\begin{lemma}[Two competing clouds] \label{lemmaB}

{\em Let $N \sim {\rm B}(M, \alpha)$ and ${N\mathstrut}_{\!1} \sim {\rm B}(M-1, \alpha)$ be two statistically independent binomial random variables
with the parameters $M \geq 2$ and $\alpha \in (0, q\,]\cup \{1\}$, where $q \in (0, 1)$ is a constant. Then}
\begin{equation} \label{eqConstBound}
\Pr\left\{N \, \geq \, {N\mathstrut}_{\!1} + 1 \, | \, N \, \geq \, 1\right\}
\;\; \geq \;\; \frac{1}{2}\big[1 - 1/\sqrt{2\pi}\,\big] + {o\mathstrut}_{M}(1),
\end{equation}
{\em where ${o\mathstrut}_{M}(1)$ depends on $q$ and as $M \rightarrow +\infty\,$ satisfies $\,{o\mathstrut}_{M}(1)\rightarrow 0$.}
\end{lemma}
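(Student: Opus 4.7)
The plan is to exploit the coupling in which $N$ has the same distribution as $N_1' + Z$, where $N_1'$ is an independent copy of $N_1$ and $Z$ is a $\text{Bernoulli}(\alpha)$ variable independent of both. By the symmetry of $N_1$ and $N_1'$, $\Pr\{N_1' > N_1\} = (1-p)/2$ and $\Pr\{N_1' \geq N_1\} = (1+p)/2$ where $p := \Pr\{N_1 = N_1'\}$. Conditioning on $Z$ then produces the closed form
\begin{align*}
\Pr\{N \geq N_1 + 1\} \;=\; \tfrac{1}{2}\bigl(1 + (2\alpha-1)p\bigr), \qquad \Pr\{N \geq 1\} \;=\; 1 - (1-\alpha)^M.
\end{align*}
The case $\alpha = 1$ is trivial ($N = M$, $N_1 = M-1$), and for $\alpha \in [1/2, q]$ the numerator above is at least $1/2$, so the conditional probability is already at least $1/2 > \tfrac{1}{2}(1-1/\sqrt{2\pi})$ immediately. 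The work is therefore concentrated on $\alpha \in (0, 1/2)$.

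For that range, after algebraic rearrangement the target inequality takes the form
\[
(1-2\alpha)\, p \;\leq\; \tfrac{1}{\sqrt{2\pi}} + \bigl(1-\tfrac{1}{\sqrt{2\pi}}\bigr)(1-\alpha)^M + o_M(1),
\]
where the $o_M(1)$ must be uniform over $\alpha \in (0, q]$. I would establish this by splitting the range of the Poisson-type parameter $\lambda := M\alpha$ into three regimes. For $\lambda < 1$, the mode of $N_1$ is at $0$, so $p \leq \Pr\{N_1 = 0\} = (1-\alpha)^{M-1}$, and a short direct check shows $(1-2\alpha)(1-\alpha)^{M-1} - (1-1/\sqrt{2\pi})(1-\alpha)^M = (1-\alpha)^{M-1}\bigl[1/\sqrt{2\pi} - \alpha(1+1/\sqrt{2\pi})\bigr] \leq 1/\sqrt{2\pi}$, with no $o_M(1)$ needed. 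For $1 \leq \lambda \leq C$ (with $C$ a large but fixed constant), the Poisson approximation yields $p \to e^{-2\lambda}I_0(2\lambda)$ and $(1-\alpha)^M \to e^{-\lambda}$, and the limiting inequality $e^{-2\lambda}I_0(2\lambda) \leq 1/\sqrt{2\pi} + (1-1/\sqrt{2\pi})e^{-\lambda}$ (which holds with equality at $\lambda = 0$, as can be seen from the Taylor expansions of $I_0$ and $I_1$ and the fact that the LHS minus RHS has strictly negative derivative at $0$ and tends to $-1/\sqrt{2\pi}$ at $\infty$) supplies the bound up to the finite-$M$ Poisson error. For $\lambda > C$, the local CLT (or a direct Stirling estimate) gives $p \leq \max_k \Pr\{N_1 = k\} = O\bigl(1/\sqrt{M\alpha(1-\alpha)}\bigr) = O\bigl(1/\sqrt{C(1-q)}\bigr)$, which is made arbitrarily small by choosing $C$ large.

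The main obstacle is producing an $o_M(1)$ that is truly uniform across $\alpha \in (0, q]$, since the three regime analyses must be stitched together without a regime-dependent error. The hypothesis $\alpha \leq q < 1$ is essential in the large-$\lambda$ regime, as it keeps $1-\alpha$ bounded away from zero and thus keeps the CLT estimate on $p$ uniform. I would carry out the patching by first fixing $C$ large enough to make the CLT contribution negligible, then letting the finite-$M$ Poisson-approximation error vanish uniformly on the compact parameter set $\lambda \in [1, C]$, at which point the three pieces combine into a single $o_M(1)$ depending only on $q$.
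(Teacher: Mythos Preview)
Your approach is correct and shares the paper's opening move --- the Bernoulli coupling and the reduction to a diagonal/collision probability --- but diverges in the main estimate. You compute $\Pr\{N \geq N_1+1\}$ exactly via $N = N_1' + Z$ and then control $p = \Pr\{N_1 = N_1'\}$ by a three-regime analysis in $\lambda = M\alpha$ (mode-at-zero, Poisson/Bessel regime, CLT regime), patching the pieces to get a uniform $o_M(1)$. The paper instead adds $Z$ to $N_1$ (accepting a lower bound) so that both variables are ${\rm B}(M,\alpha)$; after the algebraic step
\[
\frac{1-\sum_k p(k)^2}{1-p(0)} \;=\; 1 + p(0) - \sum_{k\geq 1}\frac{p(k)^2}{1-p(0)} \;\geq\; 1 - \max_{1\leq k\leq M} p(k),
\]
it invokes a single Stirling bound $p(k) \leq \frac{e^{1/(12M)}}{\sqrt{2\pi}}\sqrt{\tfrac{M}{k(M-k)}}\,e^{-M D(k/M\,\|\,\alpha)} \leq \frac{e^{1/(12M)}}{\sqrt{2\pi}}\sqrt{\tfrac{M}{M-1}}$ for $1\leq k\leq M-1$, together with $p(M)\leq q^M$. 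The key point is that the divergence factor $e^{-M D(k/M\,\|\,\alpha)}\leq 1$ wipes out $\alpha$ entirely, so no regime splitting, Poisson limits, or Bessel inequalities are needed, and the $o_M(1)$ term is explicit and automatically uniform in $\alpha\in(0,q]$. Your route buys an exact expression for the numerator at the coupling step; the paper trades that sharpness for a one-line bound that sidesteps the uniformity issue you identify as the main obstacle.
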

The proof is given in the Appendix. In the above Lemma, $N$ and ${N\mathstrut}_{\!1} + 1$ can describe the random numbers of replicas of $\widehat{\bf y}$ in an incorrect cloud and the correct cloud, respectively.

\bigskip

{\em Proof of Lemma~\ref{lemmaUp}:} For the upper bound it is enough to consider the exponent of the probability of the event that
the transmitted and the received blocks
${\bf x}$ and $\widehat{\bf y}$
have a joint type $TV$, while
in the codebook there exists at least one incorrect codeword of the {\em same} conditional type $V$ w.r.t. $\widehat{\bf y}$,
and $\widehat{\bf y}$ appears at least once
in its cloud. As in the proof of Lemma~\ref{thm1}, this exponent is given by
\begin{equation} \label{eqJustTV}
D(TV \, \| \, PW) + \big|{A\mathstrut}_{TV}-R + \big|{B\mathstrut}_{TV} - K \big|^{+}\big|^{+}
+ o(1).
\end{equation}
The additional exponent of the conditional probability of error given this event is $o(1)$, as follows immediately by Lemma~\ref{lemmaB}, used with $M = \lfloor e^{nK}\rfloor$
and $\alpha = e^{-n{B\mathstrut}_{TV}}$ with $q = \max_{\,W(y\,|\,x) \, < \, 1}W(y\,|\,x) > 0$, or $\alpha = 1$.
In the limit of a large $n$ , we can omit $o(1)$ and by continuity minimize (\ref{eqJustTV}) over all distributions $TV$, to obtain the RHS of (\ref{eqPerrorUp}).
$\square$

\bigskip

This completes the proof of Theorem~\ref{thmAch}. An alternative 
representation of
the error exponent of Theorem~\ref{thmAch} is given by

\bigskip

\begin{lemma}[Dual form] \label{lemma11}
\begin{equation} \label{eqExplicit1}
{E\mathstrut}_{\!e}(P, R, K) \; = \;
\sup_{\substack{\\0\,\leq\,\eta \, \leq \, \rho \, \leq \, 1}}
\Big\{{E\mathstrut}_{0}(\rho, \eta, P) - \rho R - \eta K\Big\},
\end{equation}
{\em where ${E\mathstrut}_{\!e}(P, R, K)$ is defined in (\ref{eqOneDef}) and}
\begin{equation} \label{eqE0Def}
{E\mathstrut}_{0}(\rho, \eta, P) \; \triangleq \;
-\log \sum_{y}\Big[\sum_{x}P(x)W^{\frac{1\,+\,\eta}{1\,+\,\rho}}(y\,|\,x)\Big]^{1\,+\,\rho}.
\end{equation}
\end{lemma}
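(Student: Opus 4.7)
The plan is to convert (\ref{eqOneDef}) to its Lagrangian dual. My first step is to linearize the nested positive parts via the elementary identity $|t|^{+} = \max_{0 \,\le\, s \,\le\, 1} s t$. Applying it twice---once for the outer and once for the inner positive part, with the substitution $\eta = \rho\sigma$ for $\sigma \in [0,1]$---produces
\begin{equation*}
\big|{A\mathstrut}_{TV} - R + \big|{B\mathstrut}_{TV} - K\big|^{+}\big|^{+} \;=\; \max_{0 \,\le\, \eta \,\le\, \rho \,\le\, 1}\big\{\rho({A\mathstrut}_{TV} - R) + \eta({B\mathstrut}_{TV} - K)\big\},
\end{equation*}
so that (\ref{eqOneDef}) becomes a min--max over $TV$ and over the compact triangle $\{(\rho, \eta) : 0 \le \eta \le \rho \le 1\}$.

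My second step is to interchange the min and max via Sion's minimax theorem. The objective is affine (hence concave) in $(\rho, \eta)$. For convexity in the joint distribution $TV$, I note that $D(TV \, \| \, PW)$ is convex, ${B\mathstrut}_{TV}$ is linear, and ${A\mathstrut}_{TV}$ equals $-H(X|Y) - \sum_x V_X(x)\log P(x)$, where $V_X$ denotes the $X$-marginal of $TV$; the second summand is linear in $TV$, and the concavity of the conditional entropy $H(X|Y)$ in the joint distribution (which may be seen by writing $H(X|Y) = \sum_{y,x}\phi\big(TV(y,x), \,\sum_{x'}TV(y,x')\big)$ with the jointly concave kernel $\phi(a,b) = a\log(b/a)$) renders the first summand convex. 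Both domains are compact and convex, so Sion's theorem applies and exchanging yields
\begin{equation*}
{E\mathstrut}_{\!e}(P, R, K) \;=\; \sup_{0 \,\le\, \eta \,\le\, \rho \,\le\, 1}\Big\{-\rho R - \eta K + \min_{TV}\big[D(TV \, \| \, PW) + \rho\,{A\mathstrut}_{TV} + \eta\,{B\mathstrut}_{TV}\big]\Big\}.
\end{equation*}

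My last step is to evaluate the inner minimization in closed form. Substituting the decomposition $D(TV \, \| \, PW) = -{H\mathstrut}_{\!T} + {A\mathstrut}_{TV} + {B\mathstrut}_{TV}$ from the proof of Lemma~\ref{lemma1} rewrites the bracket as $-{H\mathstrut}_{\!T} + (1+\rho){A\mathstrut}_{TV} + (1+\eta){B\mathstrut}_{TV}$. For fixed $T$, the minimization over $V(\cdot \, | \, y)$ separates over $y$ and is a standard tilted-distribution problem whose minimizer is $V^{*}(x \, | \, y) \propto P(x)\,W^{(1+\eta)/(1+\rho)}(y \, | \, x)$, with per-$y$ optimal value $-(1+\rho)\log\sum_x P(x)\,W^{(1+\eta)/(1+\rho)}(y \, | \, x)$. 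Setting $g(y) \triangleq \sum_x P(x)\,W^{(1+\eta)/(1+\rho)}(y \, | \, x)$, the remaining outer minimization over $T$ reduces to $\min_T \sum_y T(y)\log[T(y)/g(y)^{1+\rho}] = -\log \sum_y g(y)^{1+\rho}$ by the Gibbs variational identity, which matches exactly ${E\mathstrut}_{0}(\rho, \eta, P)$ in (\ref{eqE0Def}) and produces (\ref{eqExplicit1}).

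The main obstacle is the minimax interchange---specifically, verifying the convexity of ${A\mathstrut}_{TV}$ in the joint distribution, which hinges on the (slightly less familiar) concavity of conditional entropy as a function of the joint distribution. Once that is in hand, everything else consists of routine Gallager--Lagrangian manipulations.
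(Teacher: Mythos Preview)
Your proof is correct, and the inner-minimization step (the Gallager-style computation of the tilted distribution and the Gibbs identity) is essentially identical to the paper's. The difference lies entirely in how you justify the min--max interchange.

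The paper never invokes Sion. Instead it peels off the two positive parts one at a time: it first lower-bounds (\ref{eqOneDef}) by replacing the outer $|\cdot|^{+}$ with $\rho[\cdot]$ and taking $\sup_{0\le\rho\le 1}$, then argues that this lower bound is the \emph{lower convex envelope} of (\ref{eqOneDef}) as a function of $R$, and finally shows that (\ref{eqOneDef}) is already convex in $R$ (this is exactly where the convexity of ${A\mathstrut}_{TV}$ in $TV$ enters, via the log-sum inequality), so the function coincides with its envelope. The same supporting-line/envelope argument is then repeated for the inner $|\cdot|^{+}$ as a function of $K$. Your approach collapses both steps into a single application of Sion on the compact triangle $\{0\le\eta\le\rho\le 1\}$, after linearizing both positive parts simultaneously. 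The trade-off: the paper's route is fully self-contained and elementary (no black-box minimax theorem), while yours is shorter and structurally cleaner; both rest on the same convexity fact for ${A\mathstrut}_{TV}$, which you prove via the concavity of $H(X|Y)$ in the joint law and the paper proves by identifying ${A\mathstrut}_{TV}=D(TV\,\|\,TP)$ and citing log-sum.
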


\bigskip

\begin{proof}
Observe first that the minimum (\ref{eqOneDef}) can be lower-bounded as
\begin{align}
&
\min_{\substack{\\ TV}}\;
\Big\{D(TV \, \| \, PW) + \Big|{A\mathstrut}_{TV}-R + \big|{B\mathstrut}_{TV} - K \big|^{+}\Big|^{+}\Big\}
\label{eqConv} \\
& \geq \;\;
\sup_{\substack{\\0\,\leq\,\rho\,\leq\,1}} \;
\nonumber \\
&
\min_{\substack{\\ TV}}\;
\Big\{D(TV \, \| \, PW) + \!\Big[{A\mathstrut}_{TV}-R + \big|{B\mathstrut}_{TV} - K \big|^{+}\Big]\cdot \rho\Big\}.
\label{eqLowerConvEnv}
\end{align}
Observe further, that
the lower bound (\ref{eqLowerConvEnv}) is the {\em lower convex envelope} of (\ref{eqConv}) as a function of $R \in \mathbb{R}$. Indeed,
the minimum (\ref{eqConv}) is a non-increasing function of $R$,
and therefore it cannot have
{\em lower supporting lines} 
with slopes greater than $0$.
It also cannot have lower supporting lines with negative slopes below $-1$,
as it decreases with the slope exactly $-1$ in the region of negative or small positive values of $R$.
Note that the objective function of the minimum (\ref{eqLowerConvEnv}) is continuous in $TV$ in the closed region of $TV \ll PW$.
Let $T_{\!\rho}\,V_{\!\rho}$ be the minimizing distribution of the minimum in (\ref{eqLowerConvEnv}) for a given $\rho \in [0, 1]$.
For this distribution there exists a real $R = R(\rho)$ such that the expression in the square brackets of (\ref{eqLowerConvEnv}) is zero.
Therefore, there is equality between (\ref{eqLowerConvEnv}) and (\ref{eqConv}) at $R(\rho)$. And this is achieved for each $\rho \in [0, 1]$,
which corresponds to lower supporting lines of slopes $-\rho$ between $0$ and $-1$.

Finally observe that there is in fact equality between (\ref{eqConv}) and (\ref{eqLowerConvEnv}) for all $R$,
since (\ref{eqConv}) is a convex ($\cup$) function of $R$ and therefore it coincides with its lower convex envelope.
Indeed, using the property $|\,t\,|^{+} = \max\,\{0, t\}$, the objective function of the minimization (\ref{eqConv}) can be rewritten as a maximum of three terms:
\begin{align}
\max
\Big\{& D(TV \, \| \, PW),
\nonumber \\
&
D(TV \, \| \, PW) + {A\mathstrut}_{TV} - R,
\nonumber \\
&
D(TV \, \| \, PW) + {A\mathstrut}_{TV} - R + {B\mathstrut}_{TV} - K
\Big\}.
\nonumber
\end{align}
Then, this objective function is convex ($\cup$) in the triple $(TV, R, K)$, as a maximum of convex ($\cup$) functions of $(TV, R, K)$.
In particular, the convexity of ${A\mathstrut}_{TV} = D(V \, \| \, P \, | \, T) \equiv D(TV \, \| \, TP)$ in $TV$
follows by the log-sum inequality \cite{CoverThomas}.
By the definition of convexity it is then verified that the minimum (\ref{eqConv}) itself is a convex ($\cup$) function of $R$.

So far we have shown that (\ref{eqConv}) and (\ref{eqLowerConvEnv}) are equal.
Consider now the minimum of (\ref{eqLowerConvEnv}) with any $\rho \in [0, 1]$:
\begin{align}
&
\min_{\substack{\\ TV}}\;
\Big\{D(TV \, \| \, PW) + \rho \Big[{A\mathstrut}_{TV}-R + \big|{B\mathstrut}_{TV} - K \big|^{+}\Big]\Big\}
\label{eqInnerMin} \\
& \geq \;\; \sup_{\substack{\\0\,\leq\,\beta\,\leq\,1}} \;
\nonumber \\
&
\min_{\substack{\\ TV}}\;
\Big\{D(TV \, \| \, PW) + \rho \Big[{A\mathstrut}_{TV}-R + \big[{B\mathstrut}_{TV} - K \big]\cdot \beta\Big]\Big\}.
\label{eqLowerConvEnv2}
\end{align}
By the same reasoning as before, there is equality also between (\ref{eqInnerMin}) and (\ref{eqLowerConvEnv2}). 
Putting together (\ref{eqLowerConvEnv2}) and (\ref{eqLowerConvEnv}) and denoting $\beta\cdot\rho = \eta$, we can rewrite (\ref{eqConv}) as
\begin{align}
& \sup_{\substack{\\0\,\leq\,\rho\,\leq\,1\\
0\,\leq\,\eta\,\leq\,\rho
}} \;
\min_{\substack{\\ TV}} \;
\Big\{D(TV \, \| \, PW) + \rho \big[{A\mathstrut}_{TV}-R\big] + \eta\big[{B\mathstrut}_{TV} - K \big]\Big\}
\nonumber \\
= \;\; & \sup_{\substack{\\0\,\leq\,\rho\,\leq\,1 \\
0\,\leq\,\eta\,\leq\,\rho
}} \;
\min_{\substack{\\ TV}}\;
\Big\{\underbrace{D(T \,\|\, T_{\!\rho,\,\eta}) + (1+\rho)D(V \,\|\, V_{\!\rho,\,\eta}\,|\,T)}_{\geq \; 0}
+
{E\mathstrut}_{0}(\rho, \eta, P)
- \rho R - \eta K\Big\},
\nonumber
\end{align}
where the minimizing solution is
\begin{equation} \label{eqMinimSol}
T_{\!\rho,\,\eta}(y)V_{\!\rho,\,\eta}(x\,|\,y) \; \equiv \; \frac{1}{c} \cdot P(x)W^{\frac{1+\eta}{1+\rho}}(y\,|\,x)\bigg[\sum_{\widetilde{x}}P(\widetilde{x})W^{\frac{1+\eta}{1+\rho}}(y\,|\,\widetilde{x})\bigg]^{\rho}.
\end{equation}
\end{proof}

\bigskip

\section{A converse theorem for the error and correct-decoding exponents}\label{Con}

\bigskip

Let ${P\mathstrut}_{\!e}({\cal C\mathstrut}_{\!n})$ denote the 
average error
probability of the maximum-likelihood decoder 
for a given codebook $\,{\cal C\mathstrut}_{\!n}$ of block length $n$, averaged over all messages and channel instances.
Let ${I\mathstrut}_{TV} = \min_{\,P}{A\mathstrut}_{TV}^{P}$ denote the mutual information of a pair of random variables with the joint distribution $TV$
and let us define:
\begin{align}
& {E\mathstrut}_{\!c}(R, K) \; \triangleq \;
\;\;\;\;\;\;\;\;\;\;\;\;\;\;\;\,\,\,\,
\min_{\substack{\\ PU}}\;
\;\;\;\;\;\;\;\;\,\,
\Big\{D(U\,\|\,W\,|\,P)
+ \big|R - {I\mathstrut}_{\!PU} - \big|{B\mathstrut}_{\!PU} - K \big|^{+}\big|^{+}\Big\},
\label{eqCD} \\
& {E\mathstrut}_{\!e}(R, K) \; \triangleq \;
\max_{\substack{\\ P}}\min_{\substack{\\ U:\\
{I\mathstrut}_{\!PU} \, + \, |{B\mathstrut}_{\!PU} \, - \, K |^{+} \; \leq \; R
}}
\Big\{D(U\,\|\,W\,|\,P)\Big\},
\label{eqSP}
\end{align}
where $P$ and $U$ 
are such that $\,U(y\,|\,x)P(x) \equiv T(y)V(x\,|\,y)$.
Then we can show the following

\bigskip

\begin{thm}[Converse bounds] \label{thm2}
\begin{align}
& \liminf_{n\,\rightarrow\,\infty} \;
\;\min_{\substack{\\ {\cal C\mathstrut}_{\!n}}}\,\,\,
-\tfrac{1}{n}\log
\big[1 - {P\mathstrut}_{\!e}({\cal C\mathstrut}_{\!n})\big] 
\;\; \geq \;\;
{E\mathstrut}_{\!c}(R, K),
\label{eqPcorrect} \\
& \limsup_{n\,\rightarrow\,\infty} \; \max_{\substack{\\ {\cal C\mathstrut}_{\!n}}}\;
-\tfrac{1}{n}\log
\;\;\;\;
{P\mathstrut}_{\!e} 
({\cal C\mathstrut}_{\!n})
\;\;\;\;\,\,
\;\; \leq \;\;
{E\mathstrut}_{\!e}(R, K),
\label{eqPerrorSP}
\end{align}
{\em where (\ref{eqPcorrect}) holds for all $(R, K)$ and (\ref{eqPerrorSP}) holds a.e.: except possibly for such $R(K)$ where there is
a transition
(a jump) from $+\infty$
to a finite value
of (\ref{eqSP})
as a
monotonically non-increasing
function of $R$.}
\end{thm}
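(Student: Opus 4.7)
The proof strategy is to establish the two bounds separately via different techniques. For (\ref{eqPcorrect}) I would adapt a type-class packing argument in the spirit of Arimoto--Haroutunian; for (\ref{eqPerrorSP}) I would use a Haroutunian-style sphere-packing argument via cloud-channel simulation.

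\textbf{Part 1 (correct-decoding bound).} Fix an arbitrary codebook $\mathcal{C}_n$. I would decompose the correct-decoding probability according to the joint empirical type $(P,U)\in\mathcal{P}_n(\mathcal{X}\times\mathcal{Y})$ of $(\mathbf{x}_m,\widehat{\mathbf{y}})$. Standard type counting, together with Lemma~\ref{lemmaA} to concentrate the correct-cloud multiplicity of $\widehat{\mathbf{y}}$, places the exponent of the marginal probability of observing joint type $(P,U)$ at $D(U\|W|P)$. The conditional correct-decoding probability given the joint type I would then bound by $e^{-n([R-I_{PU}-|B_{PU}-K|^+]^+ + o(1))}$, combining two ingredients: (i) a combinatorial packing argument --- because the $\lfloor e^{nR}\rfloor$ decoding regions are disjoint and each codeword of input type $P$ has output conditional type-class of size $e^{nH(U|P)+o(1)}$, the fraction of the type class that decodes correctly has exponent at most $-(R-I_{PU})^+$; and (ii) a binomial tie-breaking argument via Lemma~\ref{lemmaB}, noting that each competing codeword of matching conditional type w.r.t.\ $\widehat{\mathbf{y}}$ has its cloud hit $\widehat{\mathbf{y}}$ with probability of exponent $-|B_{PU}-K|^+$, so that once the expected number $e^{n(R-I_{PU}-|B_{PU}-K|^+)}$ of live competitors exceeds one, the decoder loses with probability bounded away from zero. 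Summing over polynomially many types yields (\ref{eqPcorrect}) uniformly over $\mathcal{C}_n$.

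\textbf{Part 2 (sphere-packing bound).} For any code $\mathcal{C}_n$ of rate $R$ and any input type $P$, I would simulate each feasible ``$U$-cloud channel'' (clouds generated from $U$ in place of $W$) for $U$ satisfying $I_{PU}+|B_{PU}-K|^+\leq R$. The key auxiliary claim is that this simulated channel has capacity at most $I_{PU}+|B_{PU}-K|^+$ nats per use, which I would establish by a Fano-type mutual-information estimate of $I(X;\widehat{\mathbf Y},\text{clouds})$ identifying $|B_{PU}-K|^+$ as the per-use information that the cloud itself reveals in the large-$n$ limit. The strong converse then forces the error probability on the $U$-cloud channel to be $1-o(1)$, and since the probability that $W$-generated clouds are type-indistinguishable from $U$-generated ones has exponent $-D(U\|W|P)$, this transfers to ${P\mathstrut}_{\!e}(\mathcal{C}_n)\geq e^{-n(D(U\|W|P)+o(1))}$ on the true $W$-cloud channel. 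Taking $\max_P \min_U$ yields (\ref{eqPerrorSP}). The a.e.\ qualifier is needed precisely because at rates where the admissible set in (\ref{eqSP}) becomes empty, no simulatable $U$ exists and the minimum jumps from a finite value to $+\infty$.

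\textbf{Main obstacle.} The principal obstacle is the cloud-channel capacity estimate of Part~2: justifying that the cloud revelation contributes exactly $|B_{PU}-K|^+$ to the effective per-use capacity --- the very quantity that makes Theorem~\ref{thmAch} achievable. This requires a careful mutual-information computation for $(X,\widehat{\mathbf Y},\text{clouds})$ in the large-cloud regime, where the interplay between $R$ and $K$ dictates how much cloud-revealed side information is actually useful. A secondary difficulty in Part~1 is the exponent-tight coupling of the combinatorial packing bound with the super-exponential cloud-multiplicity concentration of Lemma~\ref{lemmaA} \emph{uniformly} over all joint types, so that the summation over the polynomial family of types does not leak additional exponent.
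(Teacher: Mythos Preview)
Your two-part strategy diverges from the paper's, which derives \emph{both} (\ref{eqPcorrect}) and (\ref{eqPerrorSP}) from a single intermediate result (Lemma~\ref{lemmacond}): an upper bound on the \emph{conditional} correct-decoding probability, $1-P_e(\mathcal{C}_n\,|\,TV)\leq e^{-n(R-I_{TV}-|B_{TV}-K|^+-\epsilon+o(1))}$, valid for every constant-composition code. That lemma is proved by a column-permutation ensemble device: one averages the given code over all permutations of the $n$ coordinate positions, which symmetrizes the codewords inside each conditional type class $\mathcal{T}(V\,|\,\hat{\mathbf y})$ so that the transmitted codeword becomes exchangeable with its competitors; Lemma~\ref{lemmaA} then concentrates the cloud multiplicities uniformly, and a counting argument over the permutation ensemble yields the bound. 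Once this conditional bound is available, (\ref{eqPcorrect}) follows by adding the type-class exponent $D(U\|W|P)$ and minimizing, and (\ref{eqPerrorSP}) follows by turning the \emph{same} conditional bound into a lower bound on $P_e(\mathcal{C}_n\,|\,TV)$ for types with $I_{TV}+|B_{TV}-K|^+<R$.

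Your Part~1 is aimed in the right direction but lacks exactly this symmetrization. For a \emph{fixed} code the competitors in $\mathcal{T}(V\,|\,\hat{\mathbf y})$ are not exchangeable with the sent codeword, so ``the decoder loses with probability bounded away from zero once the expected number of live competitors exceeds one'' does not yield the exponent $|R-I_{PU}-|B_{PU}-K|^+|^+$ without further work; Lemma~\ref{lemmaB}, which you invoke, is used in the paper only for the random-coding \emph{achievability} side (Lemma~\ref{lemmaUp}), not for the converse. The permutation ensemble is what legitimizes the $1/N_{\text{competitors}}$ heuristic for an arbitrary code.

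Your Part~2 has a sharper gap. The Haroutunian-style simulation needs a change of measure from $W$-generated clouds to $U$-generated clouds, but the decoder observes the \emph{entire} clouds ($\lfloor e^{nK}\rfloor$ vectors per codeword), so the Radon--Nikodym cost of making the clouds ``type-indistinguishable from $U$'' is doubly exponential, not $e^{-nD(U\|W|P)}$; if instead you change measure only on the single received vector $\hat{\mathbf y}$, the clouds remain $W$-generated and the strong converse for the $U$-cloud channel no longer applies to the decoder you actually face. Moreover, the capacity claim $I_{PU}+|B_{PU}-K|^+$ for the $U$-cloud ensemble is precisely the converse half of Theorem~\ref{thm3}, which in the paper is a \emph{corollary} of Theorem~\ref{thm2}; establishing it independently via Fano is essentially as hard as the theorem itself. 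Note finally that if your Part~1 were tightened to the conditional bound of Lemma~\ref{lemmacond}, Part~2 would follow from it immediately, with no need for a separate simulation argument.
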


\bigskip

Let ${P\mathstrut}_{\!e}({\cal C\mathstrut}_{\!n}\, | \, TV)$ denote
the {\em conditional} average error probability of
the maximum-likelihood decoder
for a codebook $\,{\cal C\mathstrut}_{\!n}$,
given that the joint type of the sent and the received blocks is $TV$.
Theorem~\ref{thm2} is 
a corollary of the following 
upper bound on the corresponding conditional probability of correct decoding:

\bigskip

\begin{lemma} \label{lemmacond}
{\em For any constant composition codebook $\,{\cal C\mathstrut}_{\!n}$ and any $\epsilon > 0$}
\begin{align}
&
1 - {P\mathstrut}_{\!e}({\cal C\mathstrut}_{\!n}\, | \, TV) \;\; \leq \;\; e^{-n\big(R \, - \, {I\mathstrut}_{TV} \, - \, {|{B\mathstrut}_{TV} \, - \, K|\mathstrut}^{+} \, - \, \epsilon \, + \, o(1)\big)},
\label{eqUpperCond}
\end{align}
{\em where the term $o(1)$, vanishing uniformly w.r.t. $TV$ as $n\rightarrow \infty$, depends on $\epsilon$, but does not depend on the choice of $\,{\cal C\mathstrut}_{\!n}$. 
}
\end{lemma}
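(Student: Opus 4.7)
The plan is to convert the averaged conditional correct-decoding probability into a deterministic count over codewords and cloud contents, upper-bound that count by combining the super-exponential control of Lemma~\ref{lemmaA} with the constant-composition structure, and finally normalize by $\Pr\{TV\}$. Letting $N_m({\bf y})$ denote the number of replicas of ${\bf y}$ in the cloud of ${\bf x}_m$ and $\widehat{m}({\bf y})$ the ML decoder's output, I would start from the identity
\[
\Pr\{\text{correct},\,TV \,|\, \text{clouds}\} \;=\; \frac{1}{|{\cal C}_n|\lfloor e^{nK}\rfloor}\sum_{\bf y}\sum_m N_m({\bf y})\,\mathbbm{1}\big\{\widehat{m}({\bf y}) = m,\; \text{type}({\bf x}_m,{\bf y}) = TV\big\},
\]
and collapse the inner sum by observing that at most one $m$ per ${\bf y}$ can match the ML decoder's choice, leaving $N_{\widehat{m}({\bf y})}({\bf y})\,\mathbbm{1}\{\text{type}({\bf x}_{\widehat{m}({\bf y})},{\bf y}) = TV\}$.

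Next I would invoke (\ref{eqIncor}) to enforce $N_m({\bf y}) \leq e^{n(|K - B_{TV}|^+ + \epsilon)}$ uniformly on a ``good'' event ${\cal E}$. The union bound over the at most $e^{nR}\cdot e^{nH_T}$ pairs of joint type $TV$ is dwarfed by the double-exponential decay in (\ref{eqIncor}), so $\Pr\{{\cal E}^c\}$ is super-exponentially small; its contribution to the expectation---trivially bounded by $\Pr\{{\cal E}^c\}\cdot 1$---is absorbed into the $o(1)$ of the statement.

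Third, I would bound the number of ${\bf y}$'s of type $T$ lying in the conditional type class of $V$ relative to at least one codeword by the elementary estimate
\[
\Big|\bigcup_m T_V({\bf x}_m)\Big| \;\leq\; \min\bigl\{e^{n(R + H_T - I_{TV})},\,e^{nH_T}\bigr\} \;=\; e^{n(H_T - |I_{TV} - R|^+ + o(1))},
\]
which is where the constant-composition assumption enters. In parallel, since a uniform draw from an i.i.d.\ cloud preserves the single-element marginal $\widehat{\bf y}\,|\,{\bf x}_m \sim W^n(\cdot|{\bf x}_m)$, the denominator $\Pr\{TV\}$ equals $e^{-n D(U\|W|P) + o(n)}$ with $D(U\|W|P) = B_{TV} - H_T + I_{TV}$. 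Dividing the numerator bound by this denominator and using the identities $|K-B|^+ - (K-B) = |B-K|^+$ and $(I-R) - |I-R|^+ = -|R-I|^+$, the ratio exponent reduces to $|B_{TV}-K|^+ - |R - I_{TV}|^+ + \epsilon$, which coincides with the claimed exponent whenever $R \geq I_{TV}$ and is vacuous (exceeds $1$) for $R < I_{TV}$, so $|R-I_{TV}|^+$ may be replaced by $R-I_{TV}$ at no loss.

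The main obstacle will be the careful bookkeeping needed to pass from a per-pair high-probability control on $N_m({\bf y})$ to an expectation-level bound on the averaged probability without losing exponential tightness---in particular, verifying that the union-bound budget over codewords and received sequences stays inside the super-exponential decay, and that ${\cal E}^c$ can be absorbed uniformly into $o(1)$ across all $TV$ as required by the ``uniform w.r.t.\ $TV$'' clause of the statement.
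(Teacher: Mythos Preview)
Your argument is correct and is genuinely different from the paper's. The paper proves Lemma~\ref{lemmacond} by a \emph{permutation-ensemble} (symmetrization) argument in the spirit of Dueck--K\"orner: it replaces ${\cal C}_n$ by the family of all column-permuted copies of the codebook matrix, counts the total number $S$ of ensemble codewords in the conditional type class ${\cal T}(V\,|\,\widehat{\bf y})$, and then --- splitting into the two regimes $K-{B\mathstrut}_{TV}\gtrless\epsilon$ and using both tails (\ref{eqIncor}) and (\ref{eqCor}) of Lemma~\ref{lemmaA} --- bounds the posterior of each code $k$ and the conditional correct-decoding probability within it, arriving at $J\cdot e^{O(n\epsilon)}/S$. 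Your route avoids the ensemble entirely: the identity for $\Pr\{\text{correct},TV\,|\,\text{clouds}\}$ together with the observation that the decoder selects at most one message per ${\bf y}$ reduces the problem to a pure volume bound $|\bigcup_m T_V({\bf x}_m)| \leq e^{n(H_T-|I_{TV}-R|^{+}+o(1))}$, and you only need the upper tail (\ref{eqIncor}) once, with no case split. The price is the extra bookkeeping you flag --- dividing by $\Pr\{TV\}=e^{-nD(U\|W|P)+o(n)}$ and checking that the doubly-exponential decay of $\Pr\{{\cal E}^c\}$ survives the division --- but this is routine and, as you note, uniform in $TV$. A minor bonus of your approach is that it upper-bounds the correct-decoding probability of \emph{any} decoder, not just ML, since the ``at most one $m$ per ${\bf y}$'' step uses only $\sum_m\Pr\{\widehat{m}({\bf y})=m\}\leq 1$.
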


\bigskip

\begin{proof}
Suppose we are given a constant composition codebook $\,{\cal C\mathstrut}_{\!n}$, where all $\lfloor e^{nR}\rfloor$ codewords are of the same type with empirical probabilities $P(x)$. 
Looking at the
codebook $\,{\cal C\mathstrut}_{\!n}$ as a matrix of letters from ${\cal X}$, of size $\lfloor e^{nR}\rfloor \times n$,
we construct a whole ensemble of block codes, by
permuting 
the columns of the matrix.
Observe that the total number of code permutations in the ensemble is given by
\begin{displaymath}
J \;\; \triangleq \;\; e^{n\big({H\mathstrut}_{\!P} \, + \, o(1)\big)} \; \cdot \; {\pi\mathstrut}_{\!P},
\end{displaymath}
where ${H\mathstrut}_{\!P}$ is the entropy of the empirical distribution $P$,
and ${\pi\mathstrut}_{\!P}$ denotes the number of same-symbol permutations in the type $P$,
i.e., the symbol permutations that do not change a codeword 
that is 
a member of
the type.

Suppose that for each code 
in the ensemble 
a separate independent channel instance is generated. 
And suppose that for every transmission 
one code in the ensemble (known to the decoder with its own channel instance) is chosen randomly 
with uniform probability over permutations.
Consider an event where 
the sent codeword, chosen with uniform probability over the code permutations and the messages, 
together with
the received vector 
have a joint type $TV$, such that $P(x) = \sum_{\,y}T(y)V(x\,|\,y)$.
Since the channel-generating distribution is memoryless,
this 
will result in the same conditional average probability of 
correct decoding given $TV$,
when averaged over all messages and channel instances,
as ${\cal C\mathstrut}_{\!n}$ itself.
In what follows, we will derive an upper bound on this probability. 

Let $\widehat{\bf y}$ be the received vector of the type $T$.
Consider the conditional type class ${\cal T}(V \, | \, \widehat{\bf y})$ of codewords with the 
empirical distribution $V$ given the vector $\widehat{\bf y}$.
Observe that the total number of all codewords in the ensemble belonging to this conditional type class 
(counted as distinct if corresponding to different code permutations or messages) is given by
\begin{equation} \label{eqS}
S \;\; \triangleq \;\; \underbrace{e^{n\big({H\mathstrut}_{\!V|T} \, + \, o(1)\big)} \; \cdot \; {\pi\mathstrut}_{\!P}\,}_{\text{for a message $m$}} \; \cdot \; \lfloor e^{nR}\rfloor,
\end{equation}
where ${H\mathstrut}_{\!V\,|\,T}$ is the average entropy of the conditional distribution $V$ given $T$,
i.e., ${H\mathstrut}_{\!V\,|\,T} = {\mathbb{E}\mathstrut}_{TV}\big[\!-\!\log V(X\,|\,Y)\big]$.

Let us fix two small numbers $\epsilon > \delta > 0$ and consider separately two cases.
Suppose first that $K - {B\mathstrut}_{TV} \geq \epsilon$.
In this case, the probability of an event, that
the cloud of any ${\bf x} \in {\cal T}(V \, | \, \widehat{\bf y})$ in the ensemble contains less than $e^{n(K \, - \,  {B\mathstrut}_{TV} \, - \, \delta)}$
or more than $e^{n(K \, - \,  {B\mathstrut}_{TV} \, + \, \delta)}$ vectors $\widehat{\bf y}$,
by Lemma~\ref{lemmaA} uniformly
tends to zero super-exponentially fast with $n$. Denote the complementary highly-probable event as $\Omega(\widehat{\bf y})$.
Let $k$ be an index of a code in the ensemble.
Let $N(k)$ denote the number of codewords from the conditional type class ${\cal T}(V \, | \, \widehat{\bf y})$ in the code of index $k$.
Then, given the conditions that the received vector is $\widehat{\bf y}$, that the sent codeword belongs to ${\cal T}(V \, | \, \widehat{\bf y})$, and $\Omega(\widehat{\bf y})$,
we have that the conditional probability of the code $k$ is upper-bounded by $N(k)e^{2n\delta}/S$.
Furthermore, given that indeed the code $k$ is used for communication,
the conditional probability of correct decoding is upper-bounded by $e^{2n\delta}/N(k)$.
Summing up over all codes, we can write
\begin{align}
\Pr\big\{\text{correct decoding}\,\big|\, TV, \,\widehat{\bf y},  \,\Omega(\widehat{\bf y})\big\} \;\;
&
\leq \;\;
\sum_{k:\; N(k)\, > \, 0}\frac{N(k)e^{2n\delta}}{S}\cdot\frac{e^{2n\delta}}{N(k)}
\;\;
\leq
\;\;
J
\cdot
\frac{e^{4n\delta}}{S} \;\;
\nonumber \\
& = \;\; e^{-n\big(R \, - \, {I\mathstrut}_{TV} \, - \, 4\delta \, + \, o(1)\big)}
\label{eqFirstCaseFirst} \\
& \leq \;\; e^{-n\big(R \, - \, {I\mathstrut}_{TV} \, - \, {|{B\mathstrut}_{TV} \, - \, K|\mathstrut}^{+} \, - \, 5\epsilon \, + \, o(1)\big)}.
\label{eqFirstCase}
\end{align}

Consider now the second case when $K - {B\mathstrut}_{TV} < \epsilon$.
In this case, the probability of an event, that
the cloud of any ${\bf x} \in {\cal T}(V \, | \, \widehat{\bf y})$ in the ensemble contains
more than $e^{n(\epsilon \, + \, \delta)}$ occurrences of the vector $\widehat{\bf y}$,
by (\ref{eqIncor}) of Lemma~\ref{lemmaA} uniformly
tends to zero super-exponentially fast.
Let us denote this rare event as ${\cal E\mathstrut}_{\!1}(\widehat{\bf y})$.
In fact, among the codewords ${\bf x} \in {\cal T}(V \, | \, \widehat{\bf y})$, 
those with clouds
containing $\widehat{\bf y}$ 
become rare.
However, the probability of an event, that in the ensemble
there are less than $S \cdot e^{-n\big({|{B\mathstrut}_{TV} \, - \, K|\mathstrut}^{+} \, + \, \epsilon\big)}$ codewords
from ${\cal T}(V \, | \, \widehat{\bf y})$
having at least one vector $\widehat{\bf y}$ in their cloud,
uniformly tends to zero super-exponentially fast.
This in turn can be verified similarly to (\ref{eqCor}) of Lemma~\ref{lemmaA}, using (\ref{eqS}).
Let us denote this rare event as ${\cal E\mathstrut}_{\!2}(\widehat{\bf y})$.
Let us denote the complementary (to 
the union of the events ${\cal E\mathstrut}_{\!1}(\widehat{\bf y})$ and ${\cal E\mathstrut}_{\!2}(\widehat{\bf y})$) highly-probable event as $\widetilde{\Omega}(\widehat{\bf y})$.

Let $\widetilde{N}(k)$ denote the number of such codewords in the code $k$, which {\em both} belong to the conditional type class ${\cal T}(V \, | \, \widehat{\bf y})$
{\em and} have at least one $\widehat{\bf y}$ in their respective cloud.
Then, given the 
intersection of events
that the received vector is $\widehat{\bf y}$, that the sent codeword belongs to ${\cal T}(V \, | \, \widehat{\bf y})$, and $\widetilde{\Omega}(\widehat{\bf y})$,
we obtain that the conditional probability of the code $k$ is upper-bounded by $\widetilde{N}(k)e^{n\big({|{B\mathstrut}_{TV} \, - \, K|\mathstrut}^{+} \, + \, 2\epsilon \, + \, \delta\big)}/S$.
Given that the code $k$ is used for communication,
the conditional probability of correct decoding is upper-bounded by $e^{n(\epsilon \, + \, \delta)}/\widetilde{N}(k)$.
Repeating the steps leading to (\ref{eqFirstCaseFirst}), we obtain (\ref{eqFirstCase}) once again.
\end{proof}

\bigskip

{\em Proof of Theorem~\ref{thm2}:}
First we verify the bound on the correct-decoding exponent (\ref{eqPcorrect}).
It is enough to consider constant composition codes,
because they can asymptotically achieve the same exponent of the correct-decoding probability
as the best block codes,
as is shown in the beginning of
\cite[Lemma~5]{DueckKorner79} using a suboptimal encoder-decoder pair.

Thus, let $\,{\cal C\mathstrut}_{\!n}$ be a constant composition codebook of a type $P$.
Consider an event where 
the sent codeword 
together with
the received vector 
have a joint type $PU$. 
The exponent of the probability of such event is given by $D(U \, \| \, W \, | \, P) + o(1)\,$.

Adding to this the lower bound on the exponent of the conditional probability of correct decoding given $PU$ of Lemma~\ref{lemmacond}
in the following form:
\begin{equation} \label{eqLBCond}
\big|R - {I\mathstrut}_{\!PU} - \big|{B\mathstrut}_{\!PU} - K \big|^{+} \big|^{+} - \epsilon + o(1),
\end{equation}
minimizing the resulting expression over all {\em distributions} $PU$, discarding $o(1)$, and taking $\epsilon \rightarrow 0$, 
we obtain (\ref{eqCD}).

Next we establish the bound on the error exponent (\ref{eqPerrorSP}). Here also it suffices to consider constant composition codebooks $\,{\cal C\mathstrut}_{\!n}$, because there is only a polynomial number of different types in a general codebook
of block length $n$.

Turning (\ref{eqUpperCond}) into a lower bound on ${P\mathstrut}_{\!e}({\cal C\mathstrut}_{\!n}\, | \, TV)$,
we can obtain the following upper bound on the error exponent of $\,{\cal C\mathstrut}_{\!n}$:
\begin{align}
& \max_{\substack{\\ P \,\in\,{{\cal P}\mathstrut}_{\!n}({\cal X})}}\;\min_{\substack{\\ U:\\ \\
PU  \, \in \, {{\cal P}\mathstrut}_{\!n}({\cal X}\,\times\,{\cal Y}),
\\
{I\mathstrut}_{\!PU} \, + \, |{B\mathstrut}_{\!PU} \, - \, K |^{+} \; \leq \; R \, - \, 2\epsilon
}}
\!\!\!\!\!\!\!\!
\Big\{D(U \, \| \, W \, | \, P)\Big\}
+ o(1)
 \label{eqSPTypes} \\
\leq \;\;
& \max_{\substack{\\  P \,\in\,{{\cal P}\mathstrut}_{\!n}({\cal X})}}\;\min_{\substack{\\ U:\\
{I\mathstrut}_{\!PU} \, + \, |{B\mathstrut}_{\!PU} \, - \, K |^{+} \; \leq \; R \, - \, 3\epsilon
}}
\!\!\!\!\!\!\!\!
\Big\{D(U \, \| \, W \, | \, P)\Big\}
+ o(1)
\label{eqDistW} \\
\leq \;\; &
\;\;\,
\max_{\substack{\\ P}}\;
\;\;\,
\min_{\substack{\\ U:\\
{I\mathstrut}_{\!PU} \, + \, |{B\mathstrut}_{\!PU} \, - \, K |^{+} \; \leq \; R \, - \, 3\epsilon
}}
\!\!\!\!\!\!\!\!
\Big\{D(U \, \| \, W \, | \, P)\Big\}
+ o(1).
\label{eqDistU}
\end{align}
Here (\ref{eqSPTypes}) follows directly from Lemma~\ref{lemmacond} and the fact that the exponent of $PU$
is $D(U \, \| \, W \, | \, P) + o(1)$.
In (\ref{eqDistW}) we extend the inner minimization from conditional types to arbitrary distributions $U$ with the help of an additional $\epsilon$ in the minimization condition.
In (\ref{eqDistU}) we extend the outer maximization to arbitrary distributions $P$,
and as a result the maximum cannot decrease.

In the limit of a large $n$ the vanishing term $o(1)$ in (\ref{eqDistU}) disappears and we are left with $\epsilon$. In order to replace $\epsilon > 0$ with zero,
observe that both the objective function and the expression in the minimization condition of (\ref{eqDistU}) are convex ($\cup$) functions of $U$.
It follows that the inner minimum of (\ref{eqDistU}) is a convex ($\cup$) function of $\epsilon \in \mathbb{R}$.
Therefore (\ref{eqDistU}) itself, as a maximum of convex functions of $\epsilon$, is convex ($\cup$) in $\epsilon \in \mathbb{R}$.
We conclude that by continuity of a convex function the maximum in (\ref{eqDistU}) tends to (\ref{eqSP})
as $\epsilon \rightarrow 0$, with a possible exception when (\ref{eqDistU}) jumps to $+\infty$ exactly at $\epsilon = 0$,
which corresponds to the jump to $+\infty$ of (\ref{eqSP}) as a convex ($\cup$) function of $R$ exactly at $R$.
$\square$

\bigskip

\section{Alternative representation of the converse bounds}\label{AltRep}

\bigskip

In this section we develop alternative expressions for the converse bounds of Theorem~\ref{thm2}.
Using the properties that ${I\mathstrut}_{TV} = \min_{\,P}{A\mathstrut}_{TV}^{P}$ and $|\,t\,|^{+} = \max\,\{0, t\}$,
the expression (\ref{eqCD}) for
the lower bound 
of Theorem~\ref{thm2} can be written also as
$\min_{\,P} \, {E\mathstrut}_{\!c}(P, R, K)$,
where
\begin{align}
& {E\mathstrut}_{\!c}(P, R, K) \;\; \triangleq \;\;
\min_{\substack{\\ TV}}\;
\Big\{
D(TV \, \| \, PW) + \big| R \, - {A\mathstrut}_{TV}^{P} - \big|{B\mathstrut}_{TV}^{W} - K \big|^{+}\big|^{+}\Big\},
\label{eqCDD}
\end{align}
and ${A\mathstrut}_{TV}^{P}$ and ${B\mathstrut}_{TV}^{W}$ 
are defined in (\ref{eqDefs}). 
An alternative expression for 
(\ref{eqCDD})
is given by

\bigskip

\begin{lemma}[Alternative representation --- correct-decoding exponent] \label{lemma12}
\begin{align}
{E\mathstrut}_{\!c}(P, R, K) \;\; = \;\; 
\min \bigg\{
&
\sup_{\substack{\\0 \, \leq \, \rho \, < \, 1}}
\Big\{{E\mathstrut}_{0}(-\rho, \,0, \,P) + \rho R\Big\},
\label{eqExplicit2} \\
&
\sup_{\substack{\\0 \, \leq \, \rho \, < \, 1}}
\Big\{{E\mathstrut}_{0}(-\rho, -\rho, \,P) + \rho (R + K)\Big\}
\bigg\},
\nonumber
\end{align}
{\em where ${E\mathstrut}_{\!c}(P, R, K)$ is defined by (\ref{eqCDD}) and  ${E\mathstrut}_{0}$ is defined as in (\ref{eqE0Def}).}
\end{lemma}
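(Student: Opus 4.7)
The plan is to mirror Lemma~\ref{lemma11}, but to first decompose $E_c(P, R, K)$ into two independent one-parameter Gallager-type dualizations whose values are then compared. I would begin by observing the pointwise algebraic identity
\[
\bigl|R - A_{TV} - |B_{TV} - K|^+\bigr|^+ \;=\; \min\bigl\{\,|R - A_{TV}|^+,\; |R - A_{TV} - B_{TV} + K|^+\,\bigr\},
\]
which follows from $\max(0, \min(a, b)) = \min(\max(0, a), \max(0, b))$ together with $R - A_{TV} - |B_{TV} - K|^+ = \min(R - A_{TV},\, R - A_{TV} - B_{TV} + K)$. Substituting this into (\ref{eqCDD}) and commuting the inner $\min$ with $\min_{TV}$ gives $E_c(P, R, K) = \min\{M_1, M_2\}$, where $M_1 \triangleq \min_{TV}\{D(TV\|PW) + |R - A_{TV}|^+\}$ and $M_2 \triangleq \min_{TV}\{D(TV\|PW) + |R - A_{TV} - B_{TV} + K|^+\}$.

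For each $M_i$ I would dualize via $|t|^+ = \max_{0 \le \rho \le 1}\rho t$ and invoke the ``completing the square'' identity from the proof of Lemma~\ref{lemma11}, with the paper's $\rho$ replaced by $-\rho$ and $\eta$ set to $0$ for $M_1$, or $-\rho$ for $M_2$. Since $(1-\rho) > 0$ for $\rho \in [0, 1)$, that identity yields the inequalities
\[
D(TV\|PW) - \rho A_{TV} \;\ge\; E_0(-\rho, 0, P), \qquad D(TV\|PW) - \rho A_{TV} - \rho B_{TV} \;\ge\; E_0(-\rho, -\rho, P),
\]
each attained with equality at the corresponding minimizer given by (\ref{eqMinimSol}). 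Swapping $\min_{TV}$ with $\max_\rho$ then produces the weak-duality lower bounds $M_1 \ge \sup_{0\le\rho<1}\{E_0(-\rho, 0, P) + \rho R\}$ and $M_2 \ge \sup_{0\le\rho<1}\{E_0(-\rho, -\rho, P) + \rho(R+K)\}$. For the matching upper bounds I would exhibit saddle points: a direct calculation using (\ref{eqMinimSol}) gives the Gallager-style derivative identities $\partial_\rho E_0(-\rho, 0, P) = -A_{T_{-\rho, 0}V_{-\rho, 0}}$ and $\partial_\rho E_0(-\rho, -\rho, P) = -(A + B)_{T_{-\rho, -\rho}V_{-\rho, -\rho}}$, so the stationarity condition at the optimal $\rho_i^*$ forces $A_{TV^*} = R$ for $M_1$ (respectively $A_{TV^*} + B_{TV^*} = R + K$ for $M_2$) at the Gallager minimizer $TV^*$. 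Consequently $|R - A_{TV^*}|^+ = 0$ (respectively $|R - A_{TV^*} - B_{TV^*} + K|^+ = 0$), yielding $M_i \le D(TV^*\|PW)$, which in turn equals the dual supremum by the completing-the-square identity at the minimizer.

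The main obstacle I foresee is the careful treatment of boundary cases. When $\rho_i^* = 0$ (small $R$), the stationarity identity need not hold, but in that range the trivial choice $TV = PW$ already satisfies $A_{PW} = I(P; W) \ge R$ so that $|R - A_{TV}|^+ = 0$ and $M_1 \le 0$, matching the dual value at $\rho = 0$. When the supremum is only approached in the limit $\rho \to 1^-$ (large $R$), the $L^{1/(1-\rho)}(P)$-norm structure in $E_0(-\rho, 0, P)$ degenerates, and a continuity/compactness argument is needed to produce a sequence of near-optimal distributions $TV^*_\rho$ whose objective values converge to the dual limit. Verifying the two $\partial_\rho E_0$ identities also requires a short but explicit calculation tracking the dependence of the minimizing distribution (\ref{eqMinimSol}) on its parameters.
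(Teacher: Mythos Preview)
Your proposal is correct and follows essentially the same approach as the paper: the key first step---splitting $\bigl|R - A_{TV} - |B_{TV} - K|^+\bigr|^+$ into $\min\{|R - A_{TV}|^+,\, |R - A_{TV} - B_{TV} + K|^+\}$ and commuting the two minima to obtain $M_1$ and $M_2$---is exactly what the paper does. The only minor difference is that for each $M_i$ the paper simply invokes ``the method of Lemma~\ref{lemma11}'' (i.e., the convex-envelope argument), whereas you spell out an equivalent saddle-point/derivative argument with explicit handling of the $\rho\to 1^-$ and $\rho=0$ boundaries; both routes are standard and lead to the same result.
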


\bigskip

\begin{proof}
We can rewrite (\ref{eqCDD}) as a minimum of two terms:
\begin{align}
\min\bigg\{
&
\min_{\substack{\\ TV}}\;
\Big\{
D(TV \, \| \, PW) + \big| R \, - {A\mathstrut}_{TV} \big|^{+}\Big\},
\nonumber \\
&
\min_{\substack{\\ TV}}\;
\Big\{
D(TV \, \| \, PW) + \big| R \, - {A\mathstrut}_{TV} - {B\mathstrut}_{TV} + K \big|^{+}\Big\}
\bigg\}.
\nonumber
\end{align}
Solution of each one of the terms is similar to the method of Lemma~\ref{lemma11} and gives (\ref{eqExplicit2}).
\end{proof}

\bigskip

The expression (\ref{eqSP}) for the upper bound of Theorem~\ref{thm2} can be written alternatively as

\bigskip

\begin{lemma}[Alternative representation --- upper bound on the error exponent] \label{lemmaSP}
\begin{align}
& {E\mathstrut}_{\!e}(R, K) \; = \;
\max_{\substack{\\ P}}
\sup_{\substack{\\0\,\leq\,\eta \, \leq \, \rho}}
\Big\{{E\mathstrut}_{0}(\rho, \eta, P) - \rho R - \eta K\Big\},
\label{eqAlterSP}
\end{align}
{\em where ${E\mathstrut}_{\!e}(R, K)$ and ${E\mathstrut}_{0}$ are defined in (\ref{eqSP}) and (\ref{eqE0Def}), respectively.}
\end{lemma}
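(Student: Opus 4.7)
My plan is to apply Lagrangian strong duality to the inner constrained minimization in (\ref{eqSP}), trivially exchange the outer $\max_P$ with the resulting $\sup_{\rho,\eta}$, and then match $\max_P G$ with $\max_P E_0$ through a concavity-plus-KKT argument.

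For fixed $P$, the constraint function $C(U)\triangleq I_{PU}+|B_{PU}-K|^{+}$ is convex in $U$ (as a sum of convex functions), so under Slater's condition strong duality gives $\min_{U:\,C(U)\leq R}D(U\|W|P)=\sup_{\rho\geq 0}\min_U\{D+\rho(C-R)\}$. Writing $|B-K|^{+}=\max_{0\leq\beta\leq 1}\beta(B-K)$ and applying Sion's minimax theorem to exchange $\min_U$ with $\max_\beta$ (the objective is convex in $U$, linear in $\beta$, with $U$ on the compact simplex), then substituting $\eta=\rho\beta$, I arrive at
\[
\min_{U:\,C(U)\leq R}D(U\|W|P)\;=\;\sup_{0\leq\eta\leq\rho}\bigl\{G(\rho,\eta,P)-\rho R-\eta K\bigr\},
\]
where $G(\rho,\eta,P)\triangleq\min_U\{D(U\|W|P)+\rho I_{PU}+\eta B_{PU}\}$. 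Taking $\max_P$ and trivially swapping it with $\sup_{\rho,\eta}$ (both are suprema over independent domains), the lemma reduces to showing $\max_P G(\rho,\eta,P)=\max_P E_0(\rho,\eta,P)$ for every $(\rho,\eta)$ with $0\leq\eta\leq\rho$.

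For this identity I plan to marshal three ingredients. First, substituting $I_{PU}=\min_Q D(U\|Q|P)$ and solving the decoupled inner $\min_U$ in closed form (as in the proof of Lemma~\ref{lemma11}) yields the dual representation
\[
G(\rho,\eta,P)\;=\;-(1+\rho)\sup_Q\sum_x P(x)\log\sum_y W(y|x)^{(1+\eta)/(1+\rho)}Q(y)^{\rho/(1+\rho)},
\]
which exhibits $G$ as minus a supremum of affine functions of $P$ --- hence concave in $P$; concavity of $E_0$ in $P$ is the standard Gallager fact. Second, a Jensen-then-H\"older argument on the display above gives the pointwise bound $G\geq E_0$, with equality iff $N(x)\triangleq\sum_y W(y|x)^{(1+\eta)/(1+\rho)}\sigma(y)^{\rho}$ is constant in $x$ on $\mathrm{supp}(P)$, where $\sigma(y)=\sum_{\tilde x}P(\tilde x)W(y|\tilde x)^{(1+\eta)/(1+\rho)}$. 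Third, this constancy of $N(x)$ is precisely the KKT stationarity condition for $P\in\arg\max_{P'}E_0(\rho,\eta,P')$; equivalently, it guarantees that the unconstrained $TV$-minimizer $T_{\rho,\eta}V_{\rho,\eta}$ of (\ref{eqMinimSol}) has $x$-marginal equal to $P$. At $P^{*}\in\arg\max_P E_0$ I then have $G(\rho,\eta,P^{*})=E_0(\rho,\eta,P^{*})=\max_P E_0$; combining this with $G\geq E_0$ pointwise, $P^{*}$ is a local minimum of the nonnegative function $G-E_0$, so the gradients match at $P^{*}$ and $P^{*}$ is a critical point of $G$; by concavity of $G$, $P^{*}$ is therefore a global maximizer of $G$, giving $\max_P G=G(P^{*})=\max_P E_0$.

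Substituting back yields $E_e(R,K)=\sup_{0\leq\eta\leq\rho}\{\max_P E_0(\rho,\eta,P)-\rho R-\eta K\}$, and a final trivial swap of $\sup_{\rho,\eta}$ with $\max_P$ produces (\ref{eqAlterSP}). The hardest step will be the third ingredient: verifying that Jensen--H\"older equality, the KKT condition for $\max_P E_0$, and the $x$-marginal of the unconstrained $TV$-minimizer (\ref{eqMinimSol}) all collapse to the same scalar stationarity condition, and then using concavity of $G$ to lift $P^{*}$ from a stationary point to a global maximum. Boundary cases --- $P^{*}$ on the boundary of the simplex, non-differentiability, or rates $R$ where (\ref{eqSP}) jumps from $+\infty$ per the $a.e.$ caveat of Theorem~\ref{thm2} --- would require a routine limiting argument.
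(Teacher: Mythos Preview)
Your Lagrangian-duality first stage and the reduction to $\max_P G(\rho,\eta,P)=\max_P E_0(\rho,\eta,P)$ parallel the paper. The paper's route to this identity is a sandwich: it introduces the auxiliary quantity $\max_P\min_{\tilde{P}U}D(\tilde{P}U\|PW)$ under the same constraint but with minimization over the \emph{full} joint (not just $U$ with $x$-marginal $P$), shows this equals the RHS directly by the method of Lemma~\ref{lemma11}, and bounds $E_e(R,K)$ from above by substituting the specific $\hat{U}_{\rho,\eta}$ of (\ref{eqUStar}) (thereby dropping $\min_U$) and then maximizing over $P$ explicitly via the KKT relation $\hat{c}(x)\geq\hat{c}$ for $x\notin\mathrm{supp}(\hat{P}_{\rho,\eta})$. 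Your route instead keeps the implicit $\min_U$ inside $G$ and argues that $P^*\in\arg\max_P E_0$ maximizes $G$ by gradient matching at a minimum of the nonnegative function $G-E_0$.

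That gradient-matching step has a genuine gap at boundary $P^*$, and it is not a ``routine limiting argument.'' At a boundary minimum of $G-E_0\geq 0$ you only get $D_d(G-E_0)(P^*)\geq 0$ in feasible directions $d$; combined with $D_dE_0(P^*)\leq 0$ this does \emph{not} yield $D_dG(P^*)\leq 0$. A one-dimensional counterexample on $[0,1]$: $G(p)=p-p^2$, $E_0(p)=-p^2$; both are concave, $G\geq E_0$, $G(0)=E_0(0)=\max E_0=0$, yet $\max G=1/4$. The fix within your framework is to compute $\partial G/\partial P(x)$ off-support directly: the envelope theorem applied to your dual representation gives $\partial G/\partial P(x)=-(1+\rho)\log f(x,Q^*)$, and at $P^*$ one finds $Q^*=\hat{T}_{\rho,\eta}$ and $f(x,Q^*)\propto\hat{c}(x)$, so the same KKT inequality $\hat{c}(x)\geq\hat{c}$ delivers $\partial G/\partial P(x)\leq-\log\hat{c}$ off-support --- which is precisely the computation the paper performs more transparently after first substituting $\hat{U}_{\rho,\eta}$.
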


\bigskip

The proof is given in the Appendix. Examples of this bound together with the achievable error exponent as a lower bound are given in Fig.~\ref{graph3}.
Note the discontinuities (jumps to $+\infty$) in the upper bounds.
Observing the alternative to (\ref{eqAlterSP}) expression (\ref{eqStart}), which appears in the proof of Lemma~\ref{lemmaSP},
it can be verified similarly to Lemma~\ref{lemma11} that the discontinuity (jump to $+\infty$) in (\ref{eqAlterSP}) occurs
at
\begin{align}
{R\mathstrut}_{\min}(K) \; & = \; \max_{\substack{\\ P}} \;\;
\min_{\substack{\\ TV}}
\;\;
\Big\{D(V\,\|\,P\,|\,T) + |{B\mathstrut}_{TV} - K |^{+}\Big\}\;
\nonumber \\
& = \;
\max_{\substack{\\ P}} \sup_{\substack{\\ 0\,<\,\beta\,\leq\,1}}\bigg\{-\log \max_{y}\sum_{x}P(x){W\mathstrut}^{\beta}(y\,|\,x) - \beta K\bigg\}.
\nonumber
\end{align}
For $W = \text{BSC}(p)$ this gives $
{R\mathstrut}_{\min}(K) = \sup_{\,0\,<\,\beta\,\leq\,1}\big\{-\log\left[\tfrac{1}{2}(1-p)^{\beta} + \tfrac{1}{2}p^{\beta}\right]-\beta K\big\}$, so that 
there is no jump for $K \geq -\tfrac{1}{2}\log \big[p(1-p)\big]$.

\bigskip

\begin{figure}[!t]
\centering
  \centering
  \includegraphics[width=.48\textwidth]{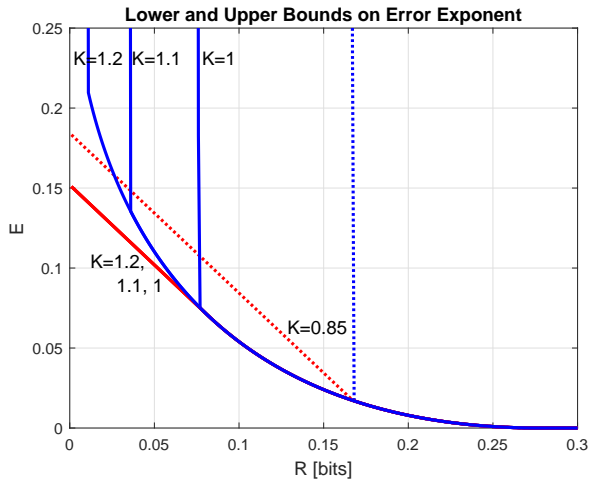}   
\caption{Lower and upper bounds on the optimal error exponent of the channel ensemble generated with $W = \text{BSC}(0.2)$,
as functions of $R$, for $K = 1.2, 1.1, 1,$ and $0.85$. The lower and the upper bounds were obtained by (\ref{eqExplicit1}) and (\ref{eqAlterSP}),
respectively, with $P = (0.5, 0.5)$.}
\label{graph3}
\end{figure}

\bigskip

\section{The capacity of the channel ensemble}\label{Cap}

\bigskip

Let us define the capacity of the channel ensemble generated with $W$, denoted as $C(W, K)$,
as the supremum of rates $R$, for which there exists a sequence of codebooks ${\cal C\mathstrut}_{\!n}$
of size $\lfloor e^{nR}\rfloor$ with ${P\mathstrut}_{\!e}({\cal C\mathstrut}_{\!n}) \rightarrow 0$
as $n\rightarrow \infty$.
Then, by the results of the previous sections, this corresponds to
the point on the $R$-axis, at which both the maximal achievable error exponent and the minimal correct-decoding exponent of the channel ensemble meet zero.
An example is shown in Fig.~\ref{graph12}.

\bigskip

\begin{thm}[Ensemble capacity] \label{thm3}
\begin{displaymath}
C(W, K) \;\; = \;\; \max\,\big\{C(W), \, {H\mathstrut}_{\!\max}(W) - K \big\},
\end{displaymath}
{\em where $C(W)$ is the Shannon capacity of the DMC
$W$, and ${H\mathstrut}_{\!\max}(W) \triangleq \max_{\, P}{H\mathstrut}_{\!T}\,$ with $\,T(y) = \sum_{\,x}P(x)W(y\,|\,x)$.}
\end{thm}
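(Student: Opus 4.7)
The plan is to read off $C(W,K)$ as the common threshold rate at which the random-coding error exponent of Theorem~\ref{thmAch} transitions from positive to zero and the lower bound on the correct-decoding exponent of Theorem~\ref{thm2} transitions from zero to positive.

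For achievability, I would locate the zero set of ${E\mathstrut}_{\!e}(P,R,K)$ in (\ref{eqOneDef}) in the variable $R$ for each fixed $P$. Substituting the zero-divergence choice $T(y)V(x\,|\,y) = P(x)W(y\,|\,x)$ annihilates $D(TV\,\|\,PW)$ and reduces ${A\mathstrut}_{TV}$ to the mutual information $I(P,W)$ and ${B\mathstrut}_{TV}$ to the conditional entropy of $Y$ given $X$ under $(P,W)$. A brief case split on the sign of ${B\mathstrut}_{TV} - K$, combined with the identity ${H\mathstrut}_{\!T} = {A\mathstrut}_{TV} + {B\mathstrut}_{TV}$ that holds under this substitution, shows that the outer $|\cdot|^{+}$ in (\ref{eqOneDef}) vanishes precisely when $R \geq \max\{I(P,W),\,{H\mathstrut}_{\!T}-K\}$. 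Since every term in the objective of (\ref{eqOneDef}) is non-negative, this feasible choice is also optimal at and above that threshold, so ${E\mathstrut}_{\!e}(P,R,K) = 0$ there and is strictly positive for smaller $R$. Theorem~\ref{thmAch} then yields exponential decay of the average error probability at every rate strictly less than
\begin{displaymath}
\max_{P}\,\max\{I(P,W),\,{H\mathstrut}_{\!T}-K\} \;\; = \;\; \max\{C(W),\,{H\mathstrut}_{\!\max}(W)-K\},
\end{displaymath}
and hence $C(W,K) \geq \max\{C(W),\,{H\mathstrut}_{\!\max}(W)-K\}$.

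For the converse, I would carry out the analogous analysis on ${E\mathstrut}_{\!c}(R,K)$ in (\ref{eqCD}). Substituting $U = W$ with $P$ free zeros out $D(U\,\|\,W\,|\,P)$, and the same case split on the sign of ${B\mathstrut}_{\!PU} - K$ yields ${E\mathstrut}_{\!c}(R,K) > 0$ precisely when $R > \max\{C(W),\,{H\mathstrut}_{\!\max}(W)-K\}$. The converse bound (\ref{eqPcorrect}) of Theorem~\ref{thm2} then forces $1 - {P\mathstrut}_{\!e}({\cal C\mathstrut}_{\!n})$ to decay exponentially for every sequence of codebooks, precluding reliable communication at any such $R$ and giving the matching upper bound on $C(W,K)$.

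The main obstacle is really only bookkeeping: confirming that in both exponents the substituted zero-divergence distribution is not merely feasible but actually attains the minimum, which is immediate from the non-negativity of every term in the respective objectives. Combining achievability with the converse delivers the claimed equality.
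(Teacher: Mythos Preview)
Your proposal is correct and follows essentially the same route as the paper's own proof: identify the common zero of $E_e(P,R,K)$ and $E_c(P,R,K)$ by forcing $D(TV\,\|\,PW)=0$ (respectively $D(U\,\|\,W\,|\,P)=0$), read off the threshold $A_{PW}^{P}+|B_{PW}^{W}-K|^{+}$, maximize over $P$, and simplify via $A_{PW}^{P}+B_{PW}^{W}=H_T$. The paper compresses your case split into the single line $\max_P\{A_{PW}^{P}+|B_{PW}^{W}-K|^{+}\}=\max\{C(W),\,H_{\max}(W)-K\}$, and for the converse it works with the per-$P$ expression $E_c(P,R,K)$ of (\ref{eqCDD}) rather than (\ref{eqCD}), but these are cosmetic differences.
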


\bigskip

\begin{figure}[!t]
\centering
  \centering
  \includegraphics[width=.48\textwidth]{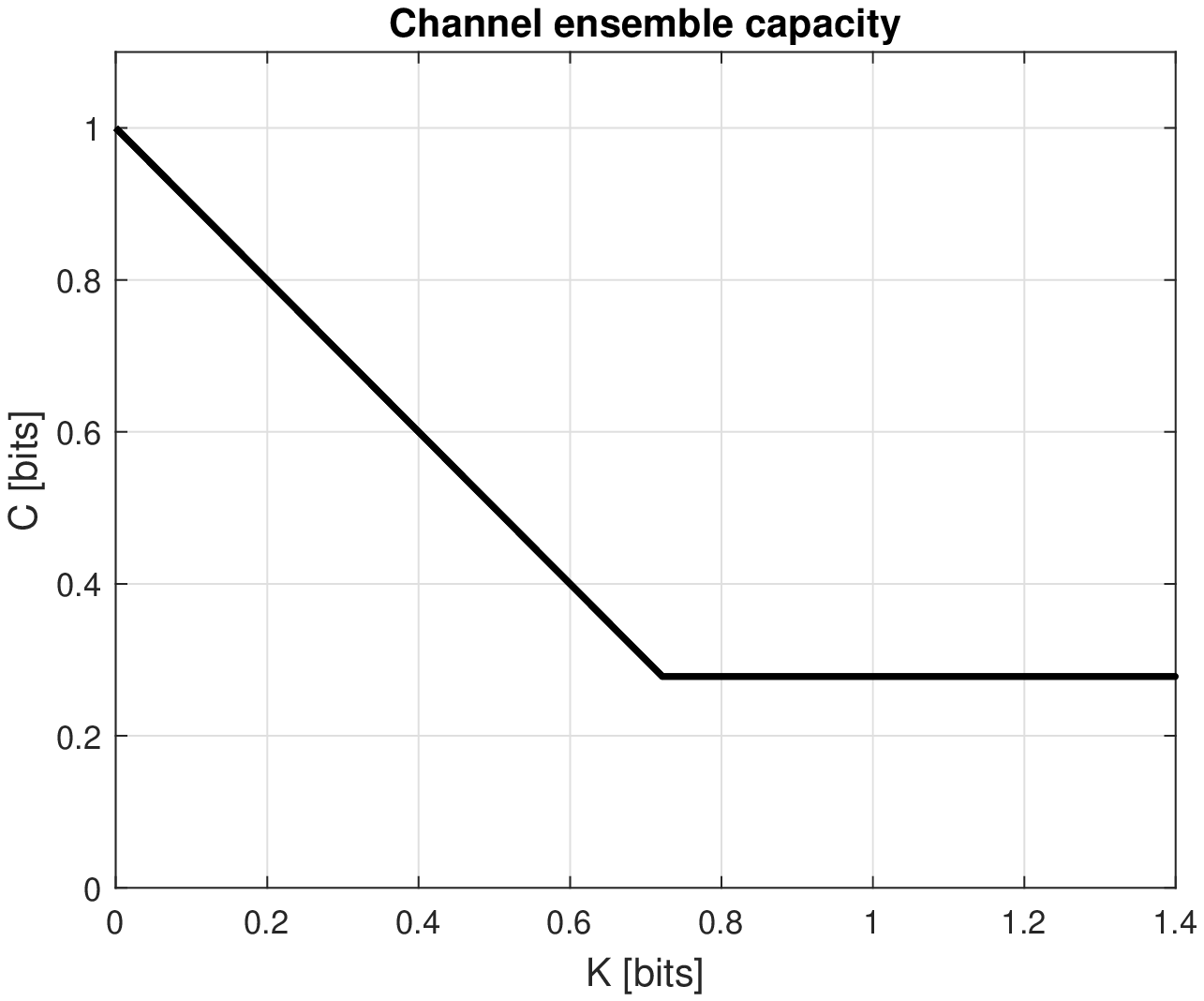}   
  \centering
  \includegraphics[width=.48\textwidth]{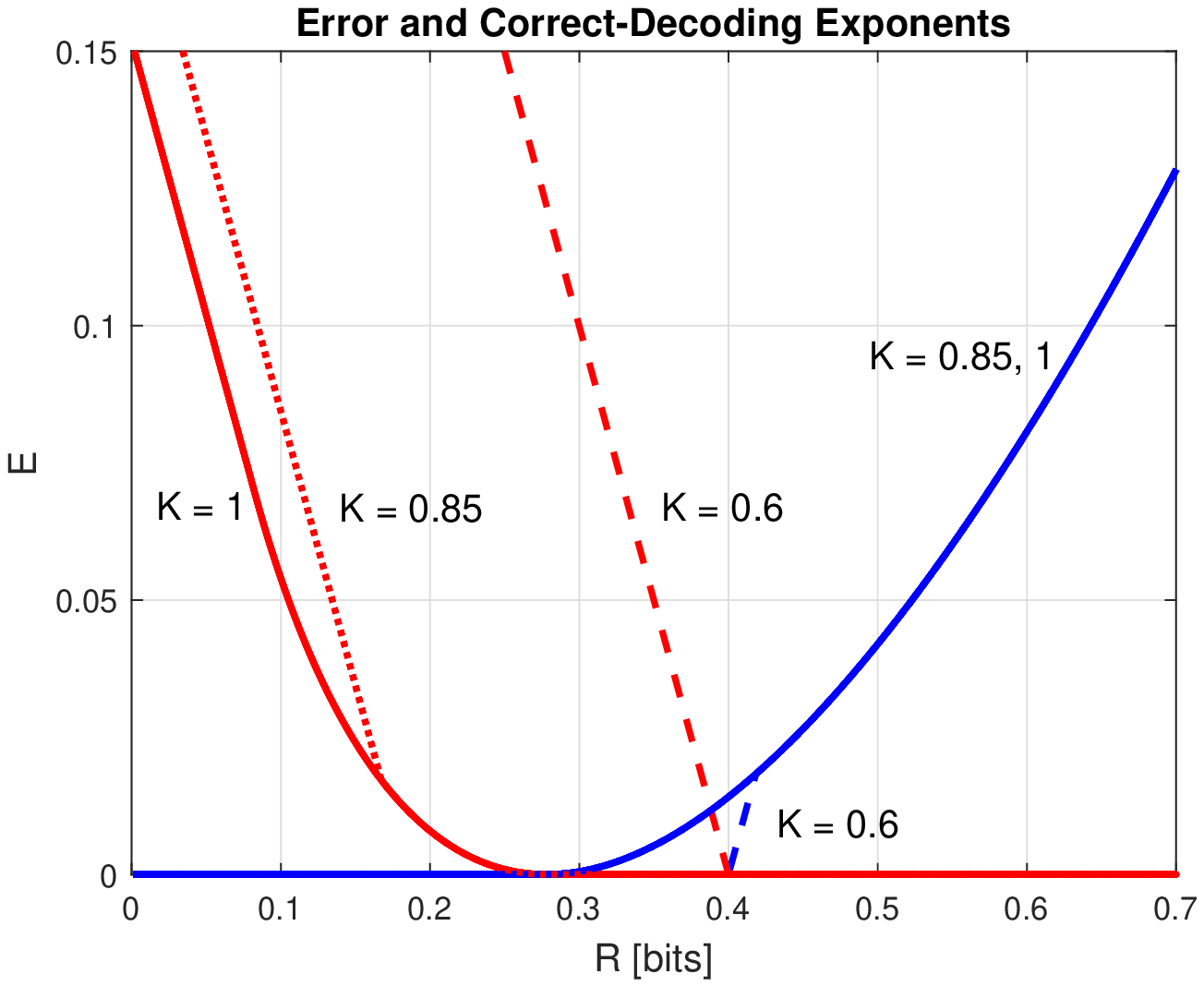}                                 
\caption{Left graph: The channel ensemble capacity $C(W, K)$ vs. $K$, with $W = \text{BSC}(0.2)$. Right graph:
Achievable error exponents (decreasing curves) and converse correct-decoding exponents (increasing curves) as functions of $R$, for $K = 1$, $0.85$, $0.6$, for the channel-generating distribution $W = \text{BSC}(0.2)$. The curves were obtained by (\ref{eqExplicit1}) and (\ref{eqExplicit2}) with $P = (0.5, 0.5)$. Note that (\ref{eqExplicit2}) for $K = 0.6$ is not convex in $R$.}
\label{graph12}
\end{figure}

\begin{proof}
The maximal achievable error exponent, provided by Theorem~\ref{thmAch}, is
\begin{displaymath}
\max_{\substack{\\P}} \; {E\mathstrut}_{\!e}(P, R, K),
\end{displaymath}
where ${E\mathstrut}_{\!e}(P, R, K)$ is given by (\ref{eqOneDef}).
The lower bound on the minimal correct-decoding exponent, given by Theorem~\ref{thm2}, can be written as
\begin{displaymath}
\min_{\substack{\\P}} \; {E\mathstrut}_{\!c}(P, R, K),
\end{displaymath}
where ${E\mathstrut}_{\!c}(P, R, K)$ is given by
(\ref{eqCDD}).
Since $D(TV \, \| \, PW)
 = 0$ iff $\,T(y)V(x\,|\,y) = P(x)W(y\,|\,x)$ for all $(x, y)$,
both expressions 
(\ref{eqOneDef})
and (\ref{eqCDD}), as functions of $R$, meet zero at the same point, which is $R = {A\mathstrut}_{PW}^{P} + \big|{B\mathstrut}_{PW}^{W} - K \big|^{+}$.
This gives
\begin{align}
C(W, K) \; & = \; \max_{\substack{\\P}}\,\Big\{{A\mathstrut}_{PW}^{P} + \big|{B\mathstrut}_{PW}^{W} - K \big|^{+}\Big\}
\nonumber \\
& = \; \max_{\substack{\\P}}\; \max \,\Big\{{A\mathstrut}_{PW}^{P}, \;  {A\mathstrut}_{PW}^{P} + {B\mathstrut}_{PW}^{W} - K \Big\}
\nonumber \\
& = \;
\max \,\Big\{\max_{\substack{\\P}}{A\mathstrut}_{PW}^{P}, \;  \max_{\substack{\\P}} \big\{{A\mathstrut}_{PW}^{P} + {B\mathstrut}_{PW}^{W} - K\big\} \Big\}
\nonumber \\
& = \;
\max \,\Big\{C(W), \;  {H\mathstrut}_{\!\max}(W) - K \Big\},
\nonumber
\end{align}
where the last equality follows because ${A\mathstrut}_{PW}^{P} + {B\mathstrut}_{PW}^{W} = {H\mathstrut}_{\!T}$ with the corresponding $T(y) = \sum_{\,x}P(x)W(y\,|\,x)$, $\forall y$.
\end{proof}

\bigskip

\section{The optimal correct-decoding exponent}\label{CorDec}

\bigskip

In fact, the lower bound (\ref{eqPcorrect}) is achievable.
As in Section~\ref{Ach}, suppose the codebook is generated i.i.d. according to a distribution over ${\cal X}$ with probabilities $P(x)$,
and let $\,\overline{\!{P\mathstrut}}{\mathstrut}_{\!e}^{\,(n)}$ denote the average error probability 
of the maximum-likelihood decoder, averaged over all possible messages, codebooks, and channel instances.

\bigskip

\begin{lemma}[Achievable correct-decoding exponent] \label{lemmaUpCorDec}
\begin{equation} \label{eqPcorrUp}
\limsup_{n\,\rightarrow\,\infty} \,-\tfrac{1}{n}\log 
\big[1 - \,\overline{\!{P\mathstrut}}{\mathstrut}_{\!e}^{\,(n)}\big]
\;\; \leq \;\;
{E\mathstrut}_{\!c}(P, R, K),
\end{equation}
{\em where ${E\mathstrut}_{\!c}(P, R, K)$ is defined in (\ref{eqCDD}).}
\end{lemma}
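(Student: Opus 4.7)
My plan is to prove this achievability by a random-coding analysis that parallels the proof of Lemma~\ref{thm1}, but lower-bounds the correct-decoding probability. For a joint type $TV^* \in \mathcal{P}_n(\mathcal{Y}\times\mathcal{X})$ approximately minimizing the objective in (\ref{eqCDD}), I will lower bound $1 - \overline{P}_e^{(n)}$ by the probability of the joint event that (i) the sent codeword $X$ and received vector $\widehat{Y}$ have joint type $TV^*$, and (ii) the ML decoder is correct. The method of types gives $\Pr(\text{type}(X, \widehat{Y}) = TV^*) = e^{-n(D(TV^*\|PW) + o(1))}$, so it suffices to show that $\Pr(\text{correct} \mid \text{type}(X, \widehat{Y}) = TV^*) \geq e^{-n(|R - A_{TV^*} - |B_{TV^*} - K|^+|^+ + o(1))}$; then (\ref{eqPcorrUp}) follows by optimizing over $TV^*$ and letting $n \to \infty$.

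To lower bound the conditional probability of correct decoding, I distinguish two regimes. When $R \leq A_{TV^*} + |B_{TV^*} - K|^+ - \epsilon$ for some fixed $\epsilon > 0$, the union-bound calculation (\ref{eqAtLeastOne}) from the proof of Lemma~\ref{thm1} shows that no incorrect codeword has $\widehat{y}$ in its cloud except with probability $o(1)$, so correct decoding succeeds with probability $1 - o(1)$ and the contribution to the exponent is zero. When $R > A_{TV^*} + |B_{TV^*} - K|^+$, the expected total number $L$ of incorrect codewords that have both (a) the same typical $\widehat{y}$-replica count $e^{n|K - B_{TV^*}|^+}$ as the correct codeword and (b) at least one $\widehat{y}$ in their cloud, is $e^{n(R - A_{TV^*} - |B_{TV^*} - K|^+ + o(1))}$, and concentrates around its mean by Chebyshev's inequality. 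Among the $L+1$ codewords (correct plus these competitors) with $\widehat{y}$ in their cloud and the same typical replica count, the replica counts are near-exchangeable; by Lemma~\ref{lemmaB}, the correct codeword is the strict maximum with probability at least a positive constant times $1/(L+1) \approx e^{-n(R - A_{TV^*} - |B_{TV^*} - K|^+)}$, giving the desired conditional exponent.

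The main technical obstacle is ruling out dangerous competitors of a conditional type $\widetilde{V}$ with $B_{T\widetilde{V}} < B_{TV^*}$, which would have strictly more typical $\widehat{y}$-replicas than the correct codeword and, if abundant, defeat the correct decoding. The key observation is that the minimizer $TV^*$ of (\ref{eqCDD}) can always be chosen so that abundant dangerous competitors do not exist: using the decomposition $D(TV\|PW) = -H_T + A_{TV} + B_{TV}$ and substituting a candidate $T\widetilde{V}$ into the objective of (\ref{eqCDD}), the optimality of $TV^*$ forces $\min(B_{TV^*}, K) \leq \min(B_{T\widetilde{V}}, K)$ for every $\widetilde{V}$ satisfying the abundance condition $R > A_{T\widetilde{V}} + |B_{T\widetilde{V}} - K|^+$. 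This precludes any $\widetilde{V}$ with strictly more typical replicas than correct among the abundant competitors. Combined with Lemma~\ref{lemmaA} to handle the rare remaining dangerous competitors, the two-case analysis above yields the claimed bound; letting $\epsilon \searrow 0$ and optimizing over $TV^*$ completes the proof.
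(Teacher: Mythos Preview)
Your approach analyzes the ML decoder directly, whereas the paper takes a much shorter route: it replaces ML by a \emph{suboptimal} decoder that fixes a single anticipated joint type $TV$, discards $\widehat{\bf y}$ of the wrong $T$-type, and among the codewords of conditional type $V$ having at least one $\widehat{\bf y}$ in their cloud chooses one \emph{uniformly at random}. Conditioned on the sent and received blocks having type $TV$, the correct codeword is automatically a candidate, the number $N_{nc}$ of incorrect candidates has $\mathbb{E}[N_{nc}]=e^{n(R-A_{TV}-|B_{TV}-K|^{+}+o(1))}$, and Jensen gives $\mathbb{E}\big[\tfrac{1}{N_{nc}+1}\big]\ge \tfrac{1}{\mathbb{E}[N_{nc}]+1}$, which is exactly the conditional bound (\ref{eqCorCondExp}). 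No case split, no competitor-type analysis, no exchangeability argument is needed.

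Your route has two genuine gaps. First, in your Case~1 you claim that $R\le A_{TV^{*}}+|B_{TV^{*}}-K|^{+}-\epsilon$ implies, via (\ref{eqAtLeastOne}), that no incorrect codeword has $\widehat{\bf y}$ in its cloud. But (\ref{eqAtLeastOne}) is a minimum over \emph{all} conditional types $\widetilde V$, not just $V^{*}$; your hypothesis only controls the single term $\widetilde V=V^{*}$ and says nothing about competitors of other types, which can be abundant. Second, in Case~2 your count $L$ of relevant competitors is restricted to the conditional type $V^{*}$, but the ML decoder must beat competitors from every abundant $\widetilde V$ with $|K-B_{T\widetilde V}|^{+}\approx |K-B_{TV^{*}}|^{+}$; their total number can have a strictly larger exponent than $R-A_{TV^{*}}-|B_{TV^{*}}-K|^{+}$ (your optimality argument only pins down $\min(B,K)$, not $A+|B-K|^{+}$), so the $1/(L+1)$ heuristic can undershoot the required bound. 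Lemma~\ref{lemmaB} also does not give the direction you need: it lower-bounds the probability that a \emph{competitor} beats the correct cloud, not the reverse among many competitors. The suboptimal-decoder trick avoids all of this because it never compares replica counts and never looks at types other than $V$.
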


\bigskip

\begin{proof}
Consider the following suboptimal decoder.
The decoder works with a single anticipated joint type $TV$ of the sent codeword ${\bf x}$ and the received vector $\widehat{\bf y}$.
If the type of $\widehat{\bf y}$ is not $T$, the decoder declares an error.
Otherwise, in case the type of the received block is indeed $T$, the decoder looks for the indices of the codewords
with the conditional type $V$ w.r.t. $\widehat{\bf y}$, with {\em at least one} replica of $\widehat{\bf y}$ in their clouds,
and chooses one of these indices as its estimate $\widehat{m}$ of the transmitted message. The choice is made randomly with uniform probability, regardless of the
actual number of replicas of $\widehat{\bf y}$ in each cloud.
If there are no codewords of the conditional type $V$ w.r.t. $\widehat{\bf y}$ with at least one $\widehat{\bf y}$ in their cloud,
then again the decoder declares an error.

Let ${N\mathstrut}_{\!nc}$ denote the random number of {\em incorrect} codewords of the conditional type $V$ w.r.t. $\widehat{\bf y}$, with at least one replica of $\widehat{\bf y}$ in their clouds, in the codebook. Then the conditional probability of the correct decoding, given that the joint type of the received and the transmitted blocks is indeed $TV$,
is given by
\begin{equation} \label{eqJensenN}
\mathbb{E}\!\left[\frac{1}{{N\mathstrut}_{\!nc}+1}\right] \;\; 
\geq
\;\;
\frac{1}{\mathbb{E}[{N\mathstrut}_{\!nc}\,] + 1},
\end{equation}
with Jensen's inequality
where the expectation is w.r.t. the randomness of both the incorrect codewords and their clouds.
Note that the exponent of $\mathbb{E}[{N\mathstrut}_{\!nc}\,]$ can be expressed as $R$ minus (\ref{eqExponentS}) with $TV$.
The RHS of (\ref{eqJensenN}) then results in the following upper bound on the exponent of the conditional probability of correct decoding:
\begin{equation} \label{eqCorCondExp}
\big|R - {A\mathstrut}_{TV} - \big|{B\mathstrut}_{TV} - K \big|^{+} \big|^{+} + o(1).
\end{equation}
Adding to this the exponent of the joint type $TV$, we obtain (\ref{eqCDD}).
\end{proof}

\bigskip

Now, since ${E\mathstrut}_{\!c}(R, K) = \min_{\,P} \, {E\mathstrut}_{\!c}(P, R, K)$,
by (\ref{eqPcorrect}) of Theorem~\ref{thm2} and Lemma~\ref{lemmaUpCorDec} we have the following

\bigskip

\begin{thm}[Optimal correct-decoding exponent] \label{thmexact}
\begin{align}
& \lim_{n\,\rightarrow\,\infty} \;
\;\min_{\substack{\\ {\cal C\mathstrut}_{\!n}}}\,\,\,
-\tfrac{1}{n}\log
\big[1 - {P\mathstrut}_{\!e}({\cal C\mathstrut}_{\!n})\big] 
\;\; = \;\;
{E\mathstrut}_{\!c}(R, K),
\label{eqExact}
\end{align}
{\em where ${E\mathstrut}_{\!c}(R, K)$ is defined in (\ref{eqCD}) and ${P\mathstrut}_{\!e}({\cal C\mathstrut}_{\!n})$ is defined as in Section~\ref{Con}.
This exponent is achievable by random coding.}
\end{thm}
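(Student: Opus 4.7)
The plan is to combine the converse bound (\ref{eqPcorrect}) from Theorem~\ref{thm2} with the random-coding achievability bound from Lemma~\ref{lemmaUpCorDec}, tied together by the identity $E_c(R,K)=\min_P E_c(P,R,K)$ already stated at the beginning of Section~\ref{AltRep}. Since the two underlying bounds are both in hand, the remaining task is purely organizational.

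For the $\geq$ direction, equation (\ref{eqPcorrect}) guarantees that every sequence of codebooks satisfies $\liminf_n -\tfrac{1}{n}\log[1-P_e(\mathcal{C}_n)]\ge E_c(R,K)$; applying this to the sequence that attains $\min_{\mathcal{C}_n}-\tfrac{1}{n}\log[1-P_e(\mathcal{C}_n)]$ for each $n$ gives the lower bound on the $\liminf$. For the $\leq$ direction, I would fix an input distribution $P$ and invoke Lemma~\ref{lemmaUpCorDec} to obtain $\limsup_n -\tfrac{1}{n}\log[1-\overline{P}_e^{(n)}]\le E_c(P,R,K)$. Because $1-\overline{P}_e^{(n)}$ is the codebook-averaged probability of correct decoding, for each $n$ there exists a codebook $\mathcal{C}_n^{\star}$ with $1-P_e(\mathcal{C}_n^{\star})\ge 1-\overline{P}_e^{(n)}$, hence $\min_{\mathcal{C}_n}-\tfrac{1}{n}\log[1-P_e(\mathcal{C}_n)]\le -\tfrac{1}{n}\log[1-\overline{P}_e^{(n)}]$. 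Passing to $\limsup_n$ and then minimizing over $P$ yields $\limsup_n\min_{\mathcal{C}_n}-\tfrac{1}{n}\log[1-P_e(\mathcal{C}_n)]\le \min_P E_c(P,R,K)=E_c(R,K)$, which matches the lower bound and simultaneously shows that the exponent is achievable by random coding (with a $P$-optimal input distribution).

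No genuine obstacle arises, since the heavy lifting is packed into Theorem~\ref{thm2} and Lemma~\ref{lemmaUpCorDec}. The only structural point worth recalling is the identity $E_c(R,K)=\min_P E_c(P,R,K)$: the $P$-dependent combination $D(TV\|PW)-A_{TV}^P$ equals the $P$-free quantity $B_{TV}^W-H_T$, so minimizing (\ref{eqCDD}) over $P$ reduces exactly to the substitution $A_{TV}^P\to I_{TV}$ (together with the identification of $TV$ with $PU$) that turns (\ref{eqCDD}) into (\ref{eqCD}).
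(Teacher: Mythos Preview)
Your proposal is correct and follows exactly the paper's approach: the paper proves Theorem~\ref{thmexact} in a single sentence by combining the converse (\ref{eqPcorrect}) of Theorem~\ref{thm2} with the achievability of Lemma~\ref{lemmaUpCorDec} via the identity $E_c(R,K)=\min_P E_c(P,R,K)$ stated in Section~\ref{AltRep}. You have simply filled in the routine details (the existence of an above-average codebook and a verification of the identity) that the paper leaves implicit.
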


\newpage

\section*{Appendix}\label{Appendix}

\bigskip

{\em Proof of Lemma~\ref{lemmaB}:}
The LHS of (\ref{eqConstBound}) equals $1$ for $\alpha = 1$. Suppose $\alpha \in (0, q\,]$.
\begin{align}
\Pr\left\{N \, \geq \, {N\mathstrut}_{\!1} + 1 \, | \, N \, \geq \, 1\right\} \; & = \;
\frac{\Pr\left\{N \, \geq \, {N\mathstrut}_{\!1} + 1, \, N \, \geq \, 1\right\}}{\Pr\left\{N \, \geq \,  1\right\}}
\nonumber \\
& = \;
\frac{\Pr\left\{N \, \geq \, {N\mathstrut}_{\!1} + 1\right\}}{\Pr\left\{N \, \geq \,  1\right\}}
\nonumber \\
& \overset{a}{\geq} \;
\frac{\Pr\left\{N \, \geq \, {N\mathstrut}_{\!1} + Z + 1\right\}}{\Pr\left\{N \, \geq \,  1\right\}}
\nonumber \\
& \overset{b}{=} \;
\frac{\Pr\left\{N \, \geq \, \widetilde{N} + 1\right\}}{\Pr\left\{N \, \geq \,  1\right\}}
\nonumber \\
& \overset{c}{=} \;
\frac{\Pr\left\{N \, \geq \, \widetilde{N} + 1\right\} + \Pr\left\{\widetilde{N} \, \geq \, N + 1\right\}}{2\Pr\left\{N \, \geq \,  1\right\}}
\nonumber \\
& = \;
\frac{\Pr\left\{N \, \neq \, \widetilde{N}\right\}}{2\Pr\left\{N \, \geq \,  1\right\}}
\nonumber \\
& \overset{d}{=} \;
\frac{1}{2}\cdot
\frac{1 - {p\mathstrut}^{2}(0) - \sum_{k\,=\,1}^{M}{p\mathstrut}^{2}(k)}{1 - p(0)}
\nonumber \\
& = \;
\frac{1}{2}\cdot
\left[1 + p(0) -
\sum_{k\,=\,1}^{M}\frac{p(k)}{1 - p(0)}\cdot p(k)
\right]
\nonumber \\
& \geq \;
\frac{1}{2}\cdot
\Big[1 + p(0) -
\max_{\substack{\\ 1\,\leq\,k\,\leq\,M}} p(k)
\Big]
\nonumber \\
& \geq \;
\frac{1}{2}\cdot
\Big[1 -
\max_{\substack{\\ 1\,\leq\,k\,\leq\,M}} p(k)
\Big]
\nonumber \\
& \overset{e}{\geq} \;
\frac{1}{2}\cdot\!
\Bigg[1 -
\,\,\,\,
\max\left\{
\frac{{e\mathstrut}^{\frac{1}{12M}}}{\sqrt{2\pi}}\sqrt{\frac{M}{M-1}}, \; {q\mathstrut}^{M}\right\}\,
\Bigg],
\nonumber
\end{align}
where in\newline
($a$) we add a nonnegative random variable $Z \sim \text{Bernoulli}(\alpha)$, statistically independent with $N$ and ${N\mathstrut}_{\!1}$;\newline
($b$) random variable $\widetilde{N} = {N\mathstrut}_{\!1} + Z \,\sim\, {\rm B}(M, \alpha)$ is statistically independent with $N$;\newline
($c$) we use the symmetry (i.i.d.) between $\widetilde{N}$ and $N$;\newline
($d$) we denote $p(k) \triangleq \Pr\left\{N = k\right\}$ and use the independence of $\widetilde{N}$ and $N$;\newline
($e$) for $k = M$ we use $p(M) \leq {q\mathstrut}^{M}$ and for $1 \leq k \leq M-1$ use Stirling's bounds 
\cite[Ch.~II, Eq.~9.15]{Feller}
to obtain
\begin{align}
p(k) \; = \; \binom{M}{k} {\alpha\mathstrut}^{k} {(1-\alpha)\mathstrut}^{M-\,k}
\; & \leq \;
\frac{{e\mathstrut}^{\frac{1}{12M}}}{\sqrt{2\pi}}\sqrt{\frac{M}{k(M-k)}}
{e\mathstrut}^{-M \cdot D(k/M\,\|\,\alpha)}
\nonumber \\
& \leq \;
\frac{{e\mathstrut}^{\frac{1}{12M}}}{\sqrt{2\pi}}\sqrt{\frac{M}{k(M-k)}}
\; \leq \;
\frac{{e\mathstrut}^{\frac{1}{12M}}}{\sqrt{2\pi}}\sqrt{\frac{M}{M-1}},
\label{eqBinomialUpperBound}
\end{align}
where $D(\cdot \, \|\, \cdot)$ is the binary Kullback-Leibler divergence.
$\square$

\bigskip

{\em Proof of Lemma~\ref{lemmaSP}:}
The proof can be done by a ``sandwich'' of the following inequalities: 
\begin{align}
&
\;\,
\max_{\substack{\\ P}}\min_{\substack{\\ \widetilde{P}U:\\
{A\mathstrut}_{\widetilde{P}U}^{P} \, + \, |{B\mathstrut}_{\!\widetilde{P}U} \, - \, K |^{+} \; \leq \; R
}}
\Big\{D(\widetilde{P}U \, \| \, PW)\Big\}
\label{eqStart} \\
\overset{a}{\leq} \;\; &
\;\,
\max_{\substack{\\ P}}\min_{\substack{\\ U:\\
{A\mathstrut}_{PU}^{P} \, + \, |{B\mathstrut}_{\!PU} \, - \, K |^{+} \; \leq \; R
}}
\Big\{D(PU \, \| \, PW)\Big\}
\label{eqConvR} \\
\overset{b}{=} \;\; &
\;\,
\max_{\substack{\\ P}}
\sup_{\substack{\\0\,\leq\,\eta \, \leq \, \rho}}
\,
\min_{\substack{\\ U}}
\;\;\;\;\;\;\;\;\;\;\,
\Big\{D(PU \, \| \, PW)+ \rho{A\mathstrut}_{PU}^{P} + \eta {B\mathstrut}_{\!PU} - \rho R - \eta K\Big\}
\label{eqEnvR} \\
= \;\; &
\sup_{\substack{\\0\,\leq\,\eta \, \leq \, \rho}}
\max_{\substack{\\ P}}
\;\,\,
\min_{\substack{\\ U}}
\;\;\;\;\;\;\;\;\;\;\,
\Big\{D(PU \, \| \, PW)+ \rho{A\mathstrut}_{PU}^{P} + \eta {B\mathstrut}_{\!PU} - \rho R - \eta K\Big\}
\nonumber \\
\overset{c}{\leq} \;\; &
\sup_{\substack{\\0\,\leq\,\eta \, \leq \, \rho}}
\max_{\substack{\\ P}}
\;\,\,
\;\;\;\;\;\;\;\;\;\;\;\;\;\;\;\;\;
\Big\{D(P{\widehat{U}\mathstrut}_{\!\!\rho, \, \eta} \, \| \, PW)+ \rho{A\mathstrut}_{P{\widehat{U}\mathstrut}_{\!\!\rho, \eta}}^{P} + \eta {B\mathstrut}_{\!P{\widehat{U}\mathstrut}_{\!\!\rho, \eta}} - \rho R - \eta K\Big\}
\nonumber \\
\overset{d}{=} \;\; &
\sup_{\substack{\\0\,\leq\,\eta \, \leq \, \rho}}
\max_{\substack{\\ P}}
\;\,\,
\;\;\;\;\;\;\;\;\;\;\;\;\;\;\;\;\;
\Big\{\rho\,{\mathbb{E}\mathstrut}_{P{\widehat{U}\mathstrut}_{\!\!\rho, \eta}}\big[\log \hat{c}\,\big]
- (1+\rho){\mathbb{E}\mathstrut}_{P{\widehat{U}\mathstrut}_{\!\!\rho, \eta}}\big[\log \hat{c}(X)\big]
-
\rho D\big(T\,\|\,{\widehat{T}\mathstrut}_{\!\!\rho, \,\eta}\big)
- \rho R - \eta K\Big\}
\nonumber \\
\overset{e}{=} \;\; &
\sup_{\substack{\\0\,\leq\,\eta \, \leq \, \rho}}
\;\;\;\;\;\;\;\;\;\;\;\;\;\;\;\;\;
\;\;\;\;\;\;\;\,\,\,\,
\Big\{{E\mathstrut}_{0}\big(\rho, \eta, {\widehat{P}\mathstrut}_{\!\!\rho, \, \eta}\big)
- \rho R - \eta K\Big\}
\nonumber \\
= \;\; &
\sup_{\substack{\\0\,\leq\,\eta \, \leq \, \rho}}
\max_{\substack{\\ P}}
\;\,\,
\;\;\;\;\;\;\;\;\;\;\;\;\;\;\;\;\;
\Big\{{E\mathstrut}_{0}(\rho, \eta, P)
- \rho R - \eta K\Big\}
\nonumber \\
= \;\; &
\;\,
\max_{\substack{\\ P}}
\sup_{\substack{\\0\,\leq\,\eta \, \leq \, \rho}}
\;\,
\;\;\;\;\;\;\;\;\;\;\;\;\;\;\;\;
\Big\{{E\mathstrut}_{0}(\rho, \eta, P)
- \rho R - \eta K\Big\},
\label{eqFinish}
\end{align}
where the transitions are as follows.\newline
($a$) The minimization over $\widetilde{P}$ is removed
by choosing $\widetilde{P}(x) \equiv P(x)$.\newline
($b$) The equivalence between the minimum in (\ref{eqConvR}) and the supremum in (\ref{eqEnvR})
can be shown by the method used in the proof of Lemma~\ref{lemma11}.\newline
($c$) The minimization over $U$ is removed by choosing $U(y\,|\,x)\equiv {\widehat{U}\mathstrut}_{\!\!\rho, \, \eta}(y\,|\,x)$
for each pair $(\rho, \eta)$,
where ${\widehat{U}\mathstrut}_{\!\!\rho, \, \eta}$ is the 
conditional distribution derived and extended to all $x \in {\cal X}$ from the joint distribution (\ref{eqMinimSol}):
\begin{equation} \label{eqUStar}
{\widehat{U}\mathstrut}_{\!\!\rho, \, \eta}(y\,|\,x) \; \equiv \; \frac{1}{\hat{c}(x)} \cdot W^{\frac{1+\eta}{1+\rho}}(y\,|\,x)\bigg[\sum_{\widetilde{x}}{\widehat{P}\mathstrut}_{\!\!\rho, \, \eta}(\widetilde{x})W^{\frac{1+\eta}{1+\rho}}(y\,|\,\widetilde{x})\bigg]^{\rho},
\end{equation}
where $\hat{c}(x)$ is the normalizing constant for each $x$ and ${\widehat{P}\mathstrut}_{\!\!\rho, \, \eta}$
is a special input distribution achieving the maximum of (\ref{eqE0Def}):
\begin{displaymath}
{\widehat{P}\mathstrut}_{\!\!\rho, \, \eta} \;\; \in \;\; \underset{P}{\arg\max}\;{E\mathstrut}_{0}(\rho, \eta, P).
\end{displaymath}
Note that the conditional distribution ${\widehat{U}\mathstrut}_{\!\!\rho, \, \eta}$ is well defined (adds up to $1$) for {\em all} $x \in {\cal X}$,
i.e., $\hat{c}(x) > 0, \forall x$.
This is because the normalizing constant $\hat{c}(x)$ is proportional to the derivative of the expression inside $\log$ of (\ref{eqE0Def})
w.r.t. $P(x)$ at ${\widehat{P}\mathstrut}_{\!\!\rho, \, \eta}(x)$,
and therefore, as in \cite[Eq.~22-23]{Arimoto76}, must {\em necessarily} satisfy
\begin{align}
\hat{c}(x) \;\; & = \;\; \hat{c}, \;\;\;\;\;\; {\widehat{P}\mathstrut}_{\!\!\rho, \, \eta}(x) > 0,
\nonumber \\
\hat{c}(x) \;\; & \geq \;\; \hat{c}, \;\;\;\;\;\; {\widehat{P}\mathstrut}_{\!\!\rho, \, \eta}(x) = 0.
\nonumber
\end{align}
Otherwise it would be possible to redistribute the probability mass ${\widehat{P}\mathstrut}_{\!\!\rho, \, \eta}(x)$ to further increase (\ref{eqE0Def}).\newline
($d$) 
The objective function is rewritten with
$\hat{c} = \sum_{x}{\widehat{P}\mathstrut}_{\!\!\rho, \, \eta}(x)\hat{c}(x)$,
$T(y) \equiv \sum_{x}P(x){\widehat{U}\mathstrut}_{\!\!\rho, \, \eta}(y\,|\,x)$,
and ${\widehat{T}\mathstrut}_{\!\!\rho, \,\eta}$ $\gg {\widehat{U}\mathstrut}_{\!\!\rho, \, \eta}(\,\cdot\,|\,x)$, $\forall x \in {\cal X}$,
which is the marginal distribution of $y$ derived from the joint distribution (\ref{eqMinimSol}) with ${\widehat{P}\mathstrut}_{\!\!\rho, \, \eta}\,$:
\begin{equation} \label{eqTStar}
{\widehat{T}\mathstrut}_{\!\!\rho, \,\eta}(y) \; \equiv \; \frac{1}{\hat{c}} \cdot
\bigg[\sum_{x}{\widehat{P}\mathstrut}_{\!\!\rho, \, \eta}(x)W^{\frac{1+\eta}{1+\rho}}(y\,|\,x)\bigg]^{1\,+\,\rho}.
\end{equation}
($e$) Observe that any distribution $P$ maximizes ${\mathbb{E}\mathstrut}_{P{\widehat{U}\mathstrut}_{\!\!\rho, \eta}}\big[\log \hat{c}\,\big]$,
while any distribution $P \ll {\widehat{P}\mathstrut}_{\!\!\rho, \, \eta}$ maximizes
${\mathbb{E}\mathstrut}_{P{\widehat{U}\mathstrut}_{\!\!\rho, \eta}}\big[-\log \hat{c}(X)\big]$.
Ultimately, the choice $P = {\widehat{P}\mathstrut}_{\!\!\rho, \, \eta}$
gives $D\big(T\,\|\,{\widehat{T}\mathstrut}_{\!\!\rho, \,\eta}\big) = 0$,
maximizing the whole expression, where ${E\mathstrut}_{0}\big(\rho, \eta, {\widehat{P}\mathstrut}_{\!\!\rho, \, \eta}\big) = -\log \hat{c}$.

Finally, observe that (\ref{eqConvR}) is the same as (\ref{eqSP}). The equivalence of the minimum in (\ref{eqStart}) and the supremum in (\ref{eqFinish})
can be shown as in the proof of Lemma~\ref{lemma11}. $\square$


\bibliographystyle{IEEEtran}

\end{document}